\newtheorem{theorem}{Theorem}
\newtheorem{lemma}{Lemma}%
\newproof{proof}{Proof}%
\newcommand{\Po}{\mathsf{P}}
\newcommand{\NP}{\mathsf{NP}}
\newcommand{\NPC}{\mathsf{NPC}}
\def\MC{{\sc mc}} 
\def\PMC{{\sc pmc}} 
\def\DPM{{\sc dpm}} 
\def\threeSAT{\textup{\textsc{3sat}}}
\def\NAE3SAT{\textup{\textsc{nae 3sat}}}
\def\oneIN3SAT{\textup{\textsc{1-in-3sat}}}
\def\P1IN3SAT{{\sc positive 1-in-3sat}}
\newtheorem{observation}{Observation}
\newtheorem{fact}{Fact}
\definecolor{teal}{rgb}{0.00,0.5,0.5} 
\definecolor{flax}{rgb}{0.90,0.78,0.00} 
\colorlet{hellgrau}{black!12!white}
\begin{document}

\begin{frontmatter}

\title{Complexity and algorithms for matching cut problems in graphs without long induced paths and cycles\tnoteref{wg}}
\tnotetext[wg]{A preliminary version~\cite{LeL23} of this paper appeared in the proceedings of the 49th International Workshop on Graph-Theoretic Concepts in Computer Science (WG~2023). This full version contains new results for matching cut problems in graphs without long induced cycles.}

\author[1]{Hoang-Oanh Le}
\ead{hoangoanhle@outlook.com}

\author[2]{Van Bang Le\corref{cor}} 
\ead{van-bang.le@uni-rostock.de}
\cortext[cor]{Corresponding author}

 \affiliation[1]{organization={Independent Researcher},
            state={Berlin},
            country={Germany}}   
            
\affiliation[2]{organization={Institut f{\"u}r Informatik, Universit{\"a}t Rostock},
            state={Rostock},
            country={Germany}} 

\begin{abstract}
In a graph, a (perfect) matching cut is an edge cut that is a (perfect) matching. \textsc{matching cut} (\MC), respectively, \textsc{perfect matching cut} (\PMC), is the problem of deciding whether  a given graph has a matching cut, respectively, a perfect matching cut. 
The \textsc{disconnected perfect matching} problem (\DPM) is to decide if a graph has a perfect matching that contains a matching cut. 
Solving an open problem posed in [Lucke, Paulusma, Ries (ISAAC 2022, Algorithmica 2023)], we show that \PMC\ is $\NP$-complete in graphs without induced $14$-vertex path $P_{14}$. Our reduction also works simultaneously for \MC\ and \DPM, improving the previous hardness results of \MC\ on $P_{15}$-free graphs and of \DPM\ on $P_{19}$-free graphs to $P_{14}$-free graphs for both problems.
Actually, we prove a slightly stronger result: within $P_{14}$-free $8$-chordal graphs (graphs without chordless cycles of length at least~$9$), it is hard to distinguish between those without matching cuts (respectively, perfect matching cuts, disconnected perfect matchings) and those in which every matching cut is a perfect matching cut. 
Moreover, assuming the Exponential Time Hypothesis, none of these problems can be solved in $2^{o(n)}$ time for $n$-vertex $P_{14}$-free $8$-chordal graphs.  

On the positive side, we show that, as for \MC\ [Moshi (JGT 1989)], \DPM\ and \PMC\ are polynomially solvable when restricted to $4$-chordal graphs. Together with the negative results, this partly answers an open question on the complexity of \PMC\ in $k$-chordal graphs asked in [Le, Telle (WG 2021, TCS 2022) \& Lucke, Paulusma, Ries (MFCS 2023, TCS 2024)].

\end{abstract}

\begin{keyword}
Matching cut \sep Perfect matching cut \sep Disconnected perfect matching \sep $H$-free graph \sep $k$-chordal graph \sep Computational complexity

\end{keyword}

\end{frontmatter}


\section{Introduction and results}
In a graph $G=(V,E)$, a \emph{cut} is a partition $V=X\cup Y$ of the vertex set into disjoint, non-empty sets $X$ and $Y$. The set of all edges in $G$ having an endvertex in~$X$ and the other endvertex in~$Y$, written $E(X,Y)$, is called the \emph{edge cut} of the cut $(X,Y)$. A \emph{matching cut} is an edge cut that is a (possibly empty) matching. 
Another way to define matching cuts is as follows; see~\cite{Chvatal84,Graham70}: a cut $(X,Y)$ is a matching cut if and only if each vertex in~$X$ has at most one neighbor in~$Y$ and each vertex in~$Y$ has at most one neighbor in~$X$. 
The classical $\NP$-complete problem \textsc{matching cut} (\MC)~\cite{Chvatal84} asks if a given graph admits a matching cut.  

An interesting special case, where the edge cut $E(X,Y)$ is a \emph{perfect matching} of~$G$,  
was considered in~\cite{HeggernesT98}. Such a matching cut is called a \emph{perfect mathing cut} and the \textsc{perfect matching cut} (\PMC) problem asks whether a given graph admits a perfect matching cut. It was shown in~\cite{HeggernesT98} that this special case \PMC\ of \MC\ remains $\NP$-complete.

A notion related to matching cut is \emph{disconnected perfect matching} which has been considered recently in~\cite{BouquetP25}: a disconnected perfect matching is a perfect matching that contains a matching cut. Observe that any perfect matching cut is a disconnected perfect matching but not the converse. Fig.~\ref{fig:examples} provides some small examples for matching cuts, perfect matching cuts and disconnected perfect matchings.

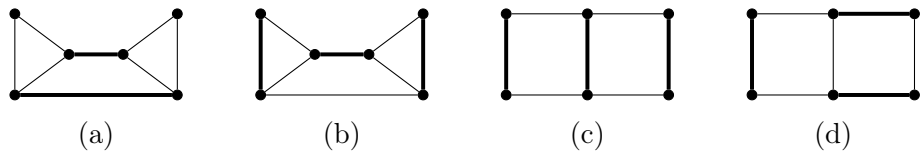
\begin{figure}[!ht]
\begin{center}
\tikzstyle{vertexS}=[draw,circle,inner sep=2pt,fill=black] 
\tikzstyle{vertex}=[draw,circle,inner sep=1.3pt,fill=black] 

\begin{tikzpicture}[scale=.36] 
\node[vertex] (a) at (1,1)  {};
\node[vertex] (b) at (1,4)  {};
\node[vertex] (c) at (7,4)  {};  
\node[vertex] (d) at (7,1)  {};
\node[vertex] (e) at (3,2.5)  {};
\node[vertex] (f) at (5,2.5)  {}; 

\draw (a)--(b)--(e); \draw[line width=1.5pt] (e)--(f); \draw (f)--(c)--(d); \draw[line width=1.5pt] (d)--(a); \draw (a)--(e);
\draw (f)--(d); 

\node at (4,-.5) {\small (a)};
\end{tikzpicture} 
\qquad
\begin{tikzpicture}[scale=.36] 
\node[vertex] (a) at (1,1)  {};
\node[vertex] (b) at (1,4)  {};
\node[vertex] (c) at (7,4)  {};  
\node[vertex] (d) at (7,1)  {};
\node[vertex] (e) at (3,2.5)  {};
\node[vertex] (f) at (5,2.5)  {}; 

\draw[line width=1.5pt] (a)--(b); \draw (b)--(e); \draw[line width=1.5pt] (e)--(f); \draw[line width=1.5pt] (d)--(c);
\draw (f)--(c); \draw (d)--(a)--(e);
\draw (f)--(d); 

\node at (4,-.5) {\small (b)};
\end{tikzpicture} 
\qquad
\begin{tikzpicture}[scale=.36] 
\node[vertex] (a) at (1,1)  {};
\node[vertex] (b) at (1,4)  {};
\node[vertex] (c) at (4,4)  {};  
\node[vertex] (d) at (4,1)  {};
\node[vertex] (e) at (7,1)  {};
\node[vertex] (f) at (7,4)  {}; 

\draw[line width=1.5pt] (a)--(b); \draw (b)--(c); \draw[line width=1.5pt] (d)--(c);
\draw (d)--(a);
\draw (d)--(e); \draw[line width=1.5pt] (e)--(f); \draw (f)--(c); 

\node at (4,-.5) {\small (c)};
\end{tikzpicture} 
\qquad
\begin{tikzpicture}[scale=.36] 
\node[vertex] (a) at (1,1)  {};
\node[vertex] (b) at (1,4)  {};
\node[vertex] (c) at (4,4)  {};  
\node[vertex] (d) at (4,1)  {};
\node[vertex] (e) at (7,1)  {};
\node[vertex] (f) at (7,4)  {}; 

\draw[line width=1.5pt] (a)--(b); \draw (b)--(c)--(d)--(a);
\draw[line width=1.5pt] (d)--(e); \draw (e)--(f); \draw[line width=1.5pt] (c)--(f); 

\node at (4,-.5) {\small (d)};
\end{tikzpicture} 
\caption{Some example graphs; bold edges indicate a matching in question. (a):  a matching cut. (b): a perfect matching that is neither a matching cut nor a disconnected perfect matching; this graph has no disconnected perfect matching, hence no perfect matching cut. (c): a perfect matching cut, hence a disconnected perfect matching. (d): a disconnected perfect matching that is not a perfect matching cut.}\label{fig:examples}
\end{center}
\end{figure}

The related problem to \MC\ and \PMC, \textsc{disconnected perfect matching} (\DPM), asks if a given graph has a disconnected perfect matching; equivalently: if a given graph has a matching cut that is extendable to a perfect matching. It was shown in~\cite{BouquetP25} that \DPM\ is $\NP$-complete. All these three problems have received much attention lately; see, e.g., ~\cite{BonnetCD24,BouquetP25,ChenHLLP21,FeghaliLPR25,GolovachKKL22,LeT22,Lucke25,LuckePR22,LuckePR23,LuckePR24} for recent results. 
In particular, we refer to~\cite{LuckePR23} for a short survey on known applications of matching cuts in other research areas.

In this paper, we focus on the complexity and algorithms of these three problems restricted to graphs without long induced paths and cycles.  
$P_4$-free graphs have already been considered by Bonsma~\cite{Bonsma09}. He proved that the $4$-cycle is the only connected $P_4$-free graph with minimum degree at least two that has a matching cut. 
In~\cite{Feghali23},  Feghali initiated the study of \MC\ for graphs without a fixed path as an induced subgraph. He showed that \MC\ is solvable in polynomial time for $P_5$-free graphs, but that there exists an integer $t>0$ for which it is $\NP$-complete for $P_t$-free graphs; indeed, $t=27$ as pointed out in~\cite{LuckePR23}. 
The current best known hardness results for \MC\ and \DPM\ in graphs without long induced paths are:
\begin{theorem}[\cite{LuckePR23}]\label{thm:P19mc}
\MC\ remains $\NP$-complete in $\{3P_5,P_{15}\}$-free graphs, \DPM\ remains $\NP$-complete in $\{3P_7,P_{19}\}$-free graphs.
\end{theorem}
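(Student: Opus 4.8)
The plan is to prove both statements by polynomial-time many-one reductions from a satisfiability problem, designing gadgets whose union is, by construction, free of the two forbidden induced subgraphs. The crucial observation is that the forbidden pairs $\{3P_5,P_{15}\}$ and $\{3P_7,P_{19}\}$ simultaneously bound the \emph{length} of the longest induced path and the \emph{number} of pairwise non-adjacent induced paths of a fixed length; hence the construction must keep all gadgets ``close together'' and of bounded interface. This is only feasible if each variable and each clause participates in a bounded number of interactions, so I would reduce from a bounded-occurrence restriction of \3SAT\ (for instance, the variant in which every variable occurs a constant number of times), which is $\NP$-complete. Bounded occurrence guarantees that every gadget attaches to the rest of the graph through a constant-size, bounded-degree interface.

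Next I would build the \MC-gadgets. Recall that a matching cut is a red/blue bipartition $(X,Y)$ with both parts non-empty in which each vertex has at most one neighbor of the opposite color; in particular every clique on at least three vertices is monochromatic, since a vertex alone on one side of such a clique would have at least two cross neighbors. I would exploit this by making each variable gadget contain two small cliques that any matching cut is forced to color oppositely, so the gadget encodes a Boolean value, and by making each clause gadget a dense structure that admits a legal red/blue split respecting the matching condition if and only if the clause is satisfied by the colors arriving from its literals. I would then verify the two directions: a satisfying assignment yields a matching cut (with non-emptiness of $X$ and $Y$ enforced by a fixed reference color somewhere in the construction), and conversely any matching cut reads back a consistent truth assignment satisfying every clause.

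The main obstacle, and where the real work lies, is establishing the structural freeness for the exact constants. For the $P_{15}$ (resp.\ $P_{19}$) bound I would argue that, because each gadget is internally of bounded diameter and communicates with the rest only through its few interface vertices, any induced path can contain only a bounded number of consecutive gadget-internal vertices and can traverse only a bounded number of gadgets before repeating an interface; a careful layering of the possible transitions then caps the induced-path length at the stated value. For the $3P_5$ (resp.\ $3P_7$) bound I would show that three \emph{pairwise non-adjacent} induced paths of that length cannot be packed: each such path must avoid the closed neighborhood of the reference/interface structure, and the construction is arranged so that the part of $G$ escaping this domination splits into at most two regions able to host a long induced path pairwise-independently. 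The delicate point is to make the gadgets tight enough that these two counting arguments go through for $15$ and $3$ (resp.\ $19$ and $3$) without destroying correctness; the two requirements pull against each other, since correctness favors loosely coupled gadgets whereas the forbidden-subgraph bounds favor heavy domination and small interfaces, and reconciling them forces a construction-specific case analysis.

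Finally, for \DPM\ I would reuse the same skeleton, but now the matching cut must extend to a perfect matching. I would augment each gadget with paired or pendant absorbing vertices that can always be saturated internally once the coloring is fixed, so that a disconnected perfect matching exists exactly when the underlying matching cut does; one checks that the added edges never create an unwanted matching cut. These extra absorbing vertices enlarge each gadget and lengthen both the admissible induced paths and the admissible independent-path packing, which is precisely why the bounds relax from $\{3P_5,P_{15}\}$ to $\{3P_7,P_{19}\}$. Re-running the two counting arguments on the enlarged gadgets then yields the \DPM\ class, completing the proof of Theorem~\ref{thm:P19mc}.
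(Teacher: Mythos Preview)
This theorem is not proved in the present paper at all: it is quoted from~\cite{LuckePR23J} as prior work, and the paper's contribution is the stronger Theorem~\ref{thm:P14free}, which supersedes it. So there is no ``paper's own proof'' of Theorem~\ref{thm:P19mc} to compare against. That said, the paper's proof of the stronger statement (Section~\ref{sec:P14free}) exposes a genuine gap in your plan.

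Your central design choice---reducing from bounded-occurrence \3SAT\ so that each gadget has a constant-size interface---does \emph{not} yield $P_t$-freeness for any fixed~$t$. Even when every variable appears a bounded number of times, the variable--clause incidence structure of a \textsc{3-cnf} formula can contain arbitrarily long induced paths (think of a formula whose incidence graph is a long path or cycle). An induced path in your graph can then alternate variable-gadget, clause-gadget, variable-gadget, \ldots\ through unboundedly many gadgets without ever ``repeating an interface'', because there is no single interface: each gadget has its own. Your sentence ``can traverse only a bounded number of gadgets before repeating an interface'' is exactly the step that fails. The same objection applies to the $3P_5$/$3P_7$ bound: nothing in a bounded-occurrence construction prevents three far-apart regions of the formula from hosting three pairwise non-adjacent long paths.

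The technique that actually delivers a fixed $P_t$ bound---used both in~\cite{LuckePR23J} and, in sharper form, in this paper---is the opposite of local bounded interfaces: one introduces one or two \emph{global} cliques (here $F$ and $T$) to which every clause gadget attaches. Since an induced path meets any clique in at most two consecutive vertices, every long induced path is forced to enter and leave these hubs within a bounded window, and the pieces of the path outside the hubs live in $G-F-T$, which is engineered to be $P_s$-free for a small~$s$ (here $s=6$). The $3P_t$ and $2P_t$ bounds fall out the same way: any induced $P_t$ of the relevant length must contain a hub vertex, and there is only one hub. The paper's single reduction from \P1IN3SAT\ then handles \MC, \PMC, and \DPM\ simultaneously because the gadget is designed so that \emph{every} matching cut is already a perfect matching cut; no separate ``absorbing vertices'' are needed for \DPM, which is why the paper gets the same bound $P_{14}$ for all three problems rather than your anticipated degradation from $P_{15}$ to $P_{19}$.
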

Prior to the present paper, no similar hardness result for \PMC\ was known. 
Indeed, it was asked in~\cite{LuckePR23}, whether there is an integer $t$ such that \PMC\ is $\NP$-complete in $P_t$-free graphs. 
Polynomial-time algorithms exist for \MC,  \PMC\ and \DPM\ in $P_6$-free graphs~\cite{LuckePR23}. 

Graphs without induced cycles of length larger than $k$ are \emph{$k$-chordal graphs}. The \emph{chordality} of a graph $G$ is the smallest integer $k$ such that $G$ is $k$-chordal. Many classical graph classes consist of $k$-chordal graphs for small $k$: chordal graphs are $3$-chordal, chordal bipartite graphs, weakly chordal graphs and cocomparability graphs are $4$-chordal. Observe that $k$-chordal graphs are particularly $(k+1)$-chordal. 
For graphs without long induced cycles, the only result we are aware of is that \MC\ is polynomially solvable in $4$-chordal graphs: 
\begin{theorem}[\cite{Moshi89}]\label{thm:longholefree-mc}
There is a polynomial-time algorithm solving \MC\ in $4$-chordal graphs. 
\end{theorem}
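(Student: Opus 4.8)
The plan is to combine a forcing procedure with a structural analysis that exploits the absence of long induced cycles. First reduce to the case that $G$ is connected and has at least three vertices, since a disconnected graph always has a matching cut (one component versus the rest, with empty edge cut). Recall that a matching cut is the same as a red/blue colouring of $V(G)$ that uses both colours and in which no vertex has two neighbours of the colour opposite to its own. The basic tool is the following set of \emph{forcing rules}, valid in every graph: (R1) if a vertex $u$ is red and already has a blue neighbour, then all remaining neighbours of $u$ are red; (R2) if $u$ is red and two of its neighbours $v,w$ are adjacent, then $v$ and $w$ are red; together with the colour-swapped versions of both. Rule~(R2) is just the observation that every triangle is monochromatic under any matching cut.

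Since $G$ is connected, every matching cut contains at least one cut edge, so it suffices to decide, for each of the $O(|E(G)|)$ edges $xy$, whether $G$ has a matching cut in which $x$ and $y$ get different colours. For a fixed such $xy$ I would colour $x$ red and $y$ blue, run the forcing rules to a fixed point, and discard this choice as soon as a contradiction appears (a vertex forced to both colours, a monochromatic vertex with two oppositely coloured neighbours, or two cut edges sharing an endpoint); if every vertex becomes coloured, a linear-time check settles the instance. The real work concerns the \emph{undetermined} vertices that survive: an undetermined vertex $q$ adjacent to a red vertex $u$ can only be the unique cut-partner of $u$, which by~(R1) and~(R2) forces $u$ to have no determined blue neighbour and forces $q$ to be an isolated vertex of $G[N(u)]$.

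The key lemma I would aim for is structural: in a \emph{$4$-chordal} graph, once the forcing rules are saturated from the seed $xy$, the subgraph induced by the undetermined vertices together with its coloured boundary is so constrained that a matching cut can be completed, or declared impossible, by a direct (essentially greedy, or small dynamic-programming) argument. The intuition is that a long chain of undetermined vertices, each depending on the next, yields an induced path, and joining its ends through the coloured boundary or through a second cut edge would produce an induced cycle of length at least five, contradicting $4$-chordality; so each connected piece of the undetermined region attaches to the rest in a tree-like fashion and contributes only boundedly many independent binary choices. The triangle-free case illustrates this cleanly: a triangle-free $4$-chordal graph has no induced cycle of length $3$ and none of length $\geq 5$, and since a shortest odd cycle is chordless it then has no odd cycle at all, hence is bipartite; being bipartite with no induced cycle of length at least six it is chordal bipartite, so the completion step follows the well-known elimination structure of chordal bipartite graphs.

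The main obstacle is exactly this structural lemma, because of the interplay between dense, clique-like regions and the ban on long induced cycles. On the one hand, $4$-chordality allows vertices of unbounded degree, so Chv\'atal's bounded-degree reasoning for \MC\ does not transfer. On the other hand, the monochromatic clumps created by triangles cannot simply be contracted: merging the three vertices of a triangle can fuse up to three cut edges at one vertex and destroy the matching property, so one has to carry these clumps along and show that they align along the (absent) long cycles in a way that keeps the completion polynomial. Making this precise, and checking that the forcing, the clump bookkeeping, and the completion all run in polynomial time over the $O(|E(G)|)$ choices of seed edge, gives the theorem.
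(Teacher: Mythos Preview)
Your overall strategy---iterate over seed edges, saturate forcing rules, then analyse the undetermined region---is exactly the approach the paper uses to re-derive Moshi's theorem. But your proposal is not a proof: you explicitly leave the structural lemma as ``the key lemma I would aim for'' and ``the main obstacle,'' and the intuition you offer (tree-like attachment, boundedly many binary choices, a dynamic programme) is both imprecise and more complicated than what is actually needed.

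The clean fact you are missing is this: after the forcing rules stabilise with coloured sets $X$ (red) and $Y$ (blue) and free set $F$, \emph{every connected component $S$ of $G[F]$ satisfies $N(S)\cap X=\emptyset$ or $N(S)\cap Y=\emptyset$}. The proof is a direct long-hole argument: if some $s\in S$ sees $x\in X$ and some $s'\in S$ sees $y\in Y$, take $s,s'$ at minimum distance in $S$; then a shortest $s,s'$-path in $S$, the edges $xs$ and $ys'$, and a chordless $x,y$-path inside $G[X\cup Y]$ (which is connected) form an induced cycle. Its length is at least five because the forcing rules guarantee $x$ and $y$ have no neighbours in $F$ on the wrong side and no free vertex has two coloured neighbours of the same colour. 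This contradicts $4$-chordality.

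Once you have this, there is no completion problem at all: put each component of $G[F]$ on the unique side it touches, and $(X\cup F_X,\,Y\cup F_Y)$ is already an $A,B$-matching cut. No dynamic programming, no binary choices, no clump bookkeeping. Your worry about contracting triangles and fusing cut edges is a red herring; the argument never contracts anything. So the gap in your write-up is precisely the missing lemma, and the fix is simpler than you anticipate.
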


Previously, no similar polynomial-time results for \DPM\ and \PMC\ in $4$-chordal  were known. 
Indeed, the computational complexity of \PMC\ in $k$-chordal graphs, $k\ge 4$, has been stated as an open problem in~\cite{LeT22,LuckePR23}.  

\paragraph{Our contributions}
We prove that \PMC\ is $\NP$-complete in graphs without induced path $P_{14}$, solving the open problem posed in~\cite{LuckePR23}. For \MC\ and \DPM\, we improve the hardness results in Theorem~\ref{thm:P19mc} in graphs without induced path $P_{15}$, respectively, $P_{19}$, to graphs without induced path $P_{14}$. It is remarkable that all these hardness results for \emph{three} problems will be obtained simultaneously by only \emph{one} reduction. Thus, our approach unifies and improves known approaches and results in the literature. Our hardness results can be stated in more details as follows.

\begin{theorem}\label{thm:P14free}
\MC, \PMC\ and \DPM\ are $\NP$-complete in $\{3P_6,2P_7,P_{14}\}$-free $8$-chordal graphs. 
Moreover, under the ETH, no algorithm with runtime $2^{o(n)}$ can solve any of these problems for $n$-vertex $\{3P_6,2P_7,P_{14}\}$-free $8$-chordal input graphs. 
\end{theorem}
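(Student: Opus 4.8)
The plan is to give a single polynomial-time reduction that settles all three problems simultaneously. I would reduce from a linear-size, ETH-hard restriction of satisfiability --- for instance $(2,2)$-\textsc{3sat}, the variant in which every clause is a disjunction of exactly three literals over pairwise distinct variables and every variable occurs exactly twice positively and exactly twice negatively --- which is $\NP$-complete and, under the ETH, admits no $2^{o(n)}$-time algorithm on instances with $n$ variables (and hence $\tfrac{4}{3}n$ clauses). From such a formula $\varphi$ I would build, in linear time, a graph $G=G(\varphi)$ with $|V(G)|=O(n)$ satisfying:
\emph{(i)} if $\varphi$ is satisfiable, then $G$ has a matching cut and, in fact, \emph{every} matching cut of $G$ is a perfect matching cut; and
\emph{(ii)} if $\varphi$ is unsatisfiable, then $G$ has no matching cut at all.
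Properties \emph{(i)} and \emph{(ii)} yield all three hardness results at once: in the satisfiable case $G$ has a matching cut, a perfect matching cut, and --- a perfect matching cut being a disconnected perfect matching --- a disconnected perfect matching; in the unsatisfiable case $G$ has none of the three, because both a perfect matching cut and a disconnected perfect matching contain a matching cut. This is exactly the ``cannot distinguish'' formulation of the abstract, and it establishes $\NP$-completeness of \MC, \PMC\ and \DPM\ on the stated class. The ETH bound then follows immediately: as $|V(G)|=O(n)$, a $2^{o(|V(G)|)}$-time algorithm for any of the three would solve $(2,2)$-\textsc{3sat} in $2^{o(n)}$ time.

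The reduction rests on one elementary fact: if $(X,Y)$ is a matching cut and $u,v,w$ induce a triangle, then $u,v,w$ lie on the same side, since otherwise the vertex alone on its side would be incident to two cut edges. Using this, I would attach to each variable $x$ a constant-size \emph{variable gadget} built from a few glued triangles and carrying four \emph{ports}, two for the positive occurrences of $x$ and two for the negative ones, so that in any matching cut all four ports of $x$ receive a common side, the two possibilities encoding the truth value of $x$; and to each clause $C=\ell_1\vee\ell_2\vee\ell_3$ a constant-size \emph{clause gadget} joined to the three relevant ports that extends a given side-assignment of those ports to a matching cut if and only if not all of $\ell_1,\ell_2,\ell_3$ are false. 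To force the rigidity in \emph{(i)} --- that no matching cut of $G$ may leave a vertex with \emph{no} cut-neighbour --- I would assemble the gadgets from small fixed graphs all of whose matching cuts are already perfect matchings (such as $C_4$, or prisms), glued along cliques so that this ``every matching cut is perfect'' property of the pieces propagates to $G$; equivalently, every vertex gets a ``mate'' and the local structure is arranged so that in any matching cut the mates are split and the cut edges are precisely the mate edges, whence any matching cut is automatically a perfect matching.

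It then remains to verify membership, i.e.\ that $G$ is $8$-chordal and $\{3P_6,2P_7,P_{14}\}$-free. I expect this to follow from a routine but careful inspection of how an induced path or induced cycle can traverse $G$. Each gadget has constant diameter, and I would wire the gadgets together through a bounded-size ``backbone'' of a few high-degree hub vertices adjacent to many ports; a sufficiently connected backbone forces any induced cycle long enough to leave a single gadget to acquire a chord --- eliminating holes on nine or more vertices --- and confines any induced path to a bounded number of gadgets, which rules out $P_{14}$ as well as three vertex-disjoint induced $P_6$'s and two vertex-disjoint induced $P_7$'s.

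The main obstacle is precisely this balancing act. The rigidity required by \emph{(i)} drives the gadgets towards being ``stiff'' --- dense in triangles and edges so that matching cuts have essentially no freedom --- whereas the forbidden-subgraph and chordality constraints drive $G$ towards being ``short'' and only very controlledly dense: no induced $P_{14}$, no three disjoint induced $P_6$'s, no hole on nine or more vertices. Designing constant-size gadgets and a backbone that are at once stiff enough for \emph{(i)}--\emph{(ii)} and tame enough for $\{3P_6,2P_7,P_{14}\}$-free $8$-chordal membership, and then carrying out the case analysis that certifies the latter, is where essentially all the work lies.
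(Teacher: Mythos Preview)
Your high-level strategy coincides with the paper's: a single linear-size reduction from an ETH-hard SAT variant producing a graph in which every matching cut is already a perfect matching cut, so that \MC, \PMC\ and \DPM\ collapse to the same yes/no answer. That part is fine. The genuine gap is that you never actually give the construction. Everything hinges on the sentence ``designing constant-size gadgets and a backbone that are at once stiff enough for (i)--(ii) and tame enough for $\{3P_6,2P_7,P_{14}\}$-free $8$-chordal membership\ldots is where essentially all the work lies'' --- and that work is absent. A proof here is the gadget plus the verification; what you have is a specification of what the gadget must achieve.

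Two concrete points where your plan diverges from what the paper does, and where it would likely run into trouble. First, the paper reduces from \textsc{positive 1-in-3sat}, not from ordinary (or $(2,2)$-) \textsc{3sat}. The ``exactly one true literal'' semantics is what makes the clause gadget work cleanly: the gadget is a copy of the cube with an attached $7$-vertex tree, and in any matching cut \emph{exactly one} of the three variable vertices of a clause crosses to the other side; with your ``at least one true'' semantics you would need a gadget admitting seven essentially different perfect-matching-cut extensions, which is much harder to keep both rigid and short. Second, the mechanism that kills long induced paths and cycles is not ``a bounded-size backbone of a few hub vertices'' but two large cliques $F$ and $T$ (one on all clause vertices, one on all $a$-vertices) with no edges between them; the $P_{14}$-freeness argument is precisely that any induced $P_7$ must meet $T$ and any induced $P_6$ in $G-T$ must meet $F$, and $8$-chordality follows from a clique-cutset decomposition through $F$ and $T$. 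Your sketch does not contain an analogue of this two-clique structure, and without it there is no reason to expect the bounds $14$ and $8$ rather than some larger constants.
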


Actually, we prove the following slightly stronger result: within $\{3P_6,2P_7,\allowbreak P_{14}\}$-free $8$-chordal graphs, it is hard to distinguish between those without matching cuts (respectively perfect matching cuts, disconnected perfect matchings) and those in which every matching cut is a perfect matching cut. Moreover, under the ETH, this task cannot be solved in subexponential time in the vertex number of the input graph.

On the positive side, we prove the following.
\begin{theorem}\label{thm:longholefree-dpm}
\DPM\ and \PMC\ can be solved in polynomial time when restricted to $4$-chordal graphs.
\end{theorem}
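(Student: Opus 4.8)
The plan is to follow the strategy behind Moshi's algorithm for \MC\ on $4$-chordal graphs (Theorem~\ref{thm:longholefree-mc}) and adapt it to the two stronger objects. The key structural fact to exploit is that in a $4$-chordal graph $G$, if $(X,Y)$ is a matching cut, then the neighborhoods across the cut are severely constrained: each component of $G[X]$ and each component of $G[Y]$ sends out a limited, ``path-like'' pattern of cut edges, because a long induced cycle would otherwise be forced through the matching cut. Concretely, one shows that for a matching cut in a $4$-chordal graph, the set of endpoints of cut edges on each side behaves essentially like the boundary of a small separator, so that the relevant partitions can be enumerated in polynomial time after fixing a bounded amount of local data. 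I would first re-derive this bounded-search structure, isolating a polynomial family $\mathcal{F}$ of candidate vertex subsets $S$ such that every matching cut $(X,Y)$ of $G$ arises by choosing some $S\in\mathcal{F}$ to be the ``$X$-side boundary'' and then propagating forced assignments (as in the standard matching-cut monotone-local-search / $2$-SAT-like closure: once a vertex has two neighbors on the opposite side it must join that side, etc.).

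For \PMC\ the extra requirement is that the edge cut $E(X,Y)$ is a \emph{perfect} matching, i.e.\ every vertex of $G$ is incident with exactly one cut edge. This is a strong condition that actually simplifies matters: it forces $|X|=|Y|$ and, more importantly, it forces $G[X]$ and $G[Y]$ to each carry a perfect matching on the ``inside'' only vacuously --- every vertex is matched \emph{across}. So after enumerating a candidate cut via the family $\mathcal{F}$, checking the \PMC\ condition is just verifying that $E(X,Y)$ saturates $V$ and is a matching, which is linear time. The algorithm for \PMC\ is therefore: enumerate the polynomially many candidate matching cuts guaranteed by the $4$-chordal structure, and for each test whether it is perfect.

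For \DPM\ the requirement is that some matching cut $(X,Y)$ extends to a perfect matching $M$ of $G$; equivalently $G$ has a matching cut $(X,Y)$ such that both $G[X]$ and $G[Y]$ have a perfect matching on the vertices not saturated by $E(X,Y)$. Here I would again enumerate the polynomially many candidate matching cuts $(X,Y)$ from $\mathcal{F}$; for each, let $X'\subseteq X$ and $Y'\subseteq Y$ be the vertices not covered by $E(X,Y)$, and test whether $G[X']$ and $G[Y']$ both have a perfect matching --- this is a polynomial-time check by Edmonds' algorithm. If some candidate passes, output ``yes''; otherwise ``no''. Correctness follows because every disconnected perfect matching contains a matching cut, and that matching cut must be one of the enumerated candidates.

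The main obstacle is the first step: establishing that a $4$-chordal graph has only polynomially many ``relevant'' matching cuts in the sense needed, i.e.\ reconstructing (and verifying the correctness of) Moshi's enumeration in a form general enough to be reused for both \PMC\ and \DPM. The subtlety is that Moshi's original argument is tailored to detecting \emph{some} matching cut, and one must check that the enumeration is exhaustive enough to capture \emph{every} matching cut up to the data that matters for the perfect-matching extension (not merely one witness), and that the bound on $|\mathcal{F}|$ genuinely uses $4$-chordality rather than a stronger hypothesis. Once that structural enumeration lemma is in place, the \PMC\ and \DPM\ algorithms and their correctness proofs are routine.
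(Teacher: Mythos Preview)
Your proposal rests on a structural claim that does not hold: $4$-chordal graphs need not have only polynomially many matching cuts, and Moshi's algorithm does not proceed by enumerating them. Already the path $P_n$ --- trivially $4$-chordal --- has exponentially many matching cuts: a $2$-coloring of $v_1,\ldots,v_n$ is a matching cut exactly when no interior vertex receives a color different from both of its neighbors, and the number of such colorings grows like the Fibonacci numbers. Hence there is no polynomial family $\mathcal{F}$ of ``candidate matching cuts'' to loop over, and both your \PMC\ routine (``for each candidate test whether it is perfect'') and your \DPM\ routine (``for each candidate test whether $G[X']$ and $G[Y']$ have perfect matchings'') break at that first step. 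Your final paragraph correctly flags this as the main obstacle, but frames it as a matter of re-reading Moshi carefully; in fact the exhaustive-enumeration lemma you would need is simply false.

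The paper avoids enumeration entirely and handles the two problems by genuinely different mechanisms. For \DPM\ it runs, for each edge $ab$, the standard forcing rules to grow $A\supseteq\{a\}$, $B\supseteq\{b\}$ and $X\supseteq A$, $Y\supseteq B$; $4$-chordality is used only to show that every component of the free set $F=V\setminus(X\cup Y)$ has neighbors in at most one of $X,Y$, so that an explicit $A,B$-matching cut exists whose cut-edge set is exactly the matching $E(A,B)$. The decisive observation is then that \emph{some} $A,B$-matching cut extends to a perfect matching of $G$ if and only if $G-A-B$ has a perfect matching, so a single maximum-matching call per edge suffices and no cut is ever enumerated. For \PMC\ the paper explicitly notes that this forcing approach does not adapt, and instead fixes a BFS layering $L_0,\ldots,L_h$ from an arbitrary root, proves that each connected component of every $G[L_i]$ is monochromatic under any matching cut, and processes the levels bottom-up: using $4$-chordality one shows that every still-undetermined vertex in $L_i$ falls into one of three local configurations that either pin down its private neighbor in $L_{i-1}$ outright or leave a binary choice expressible by four $2$-clauses. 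The assembled \textsc{2sat} instance is satisfiable if and only if $G$ has a perfect matching cut.
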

Our polynomial-time algorithm solving \DPM\ in $4$-chordal graphs is based on simple, known forcing rules. We remark that the approach also works for \MC, providing another proof for Theorem~\ref{thm:longholefree-mc}. In case of \PMC\ in $4$-chordal graphs, we will give a non-trivial polynomial-time reduction to \textsc{2sat}, based on the distance-level structure of $4$-chordal graphs admitting perfect matching cuts. 
Our positive and negative results in Theorems~\ref{thm:P14free} and \ref{thm:longholefree-dpm} on $k$-chordal graphs partly answer an open question stated in~\cite{LeT22,LuckePR23}. 

The current status of the computational complexity of \MC, \PMC\ and \DPM\ in 
$P_t$-free graphs and in $k$-chordal graphs are summarized in the Table~\ref{tab:status}. 

\begin{table}[h]
\begin{minipage}{.5\textwidth}
\begin{tabular}{@{}lclllcl@{}}
\toprule
$t$ & $\le 6$ & $7$ & $\cdots$ & $13$ & $\ge 14$\\
\midrule
\MC     & $\Po$  & ? & $\cdots$ & ? & $\NPC$\! {\small $(\star)$}  \\
\PMC   & $\Po$  & ? & $\cdots$ & ? & $\NPC$\! {\small $(\star)$}  \\
\DPM   & $\Po$  & ? & $\cdots$ & ? & $\NPC$\! {\small $(\star)$}  \\
\bottomrule
\end{tabular}
\end{minipage}
\qquad
\begin{minipage}{.3\textwidth}
\begin{tabular}{@{}lcllll@{}}
\toprule
$k$ & $\le 4$ & $5$ & $6$ &  $7$ & $\ge 8$\\
\midrule
\MC     & $\Po$   & ?  & ? & ? & $\NPC$\! {\small $(\star)$}  \\
\PMC   & $\Po$\! {\small $(\star)$}   & ?  & ? & ? & $\NPC$\! {\small $(\star)$}  \\
\DPM   & $\Po$\! {\small $(\star)$}   & ?  & ?  & ? & $\NPC$\! {\small $(\star)$}  \\
\bottomrule
\end{tabular}
\end{minipage}
\caption{The current status of the computational complexity of matching cut problems in $P_t$-free graphs (left hand-side) and in $k$-chordal graphs (right hand-side). $\Po$, $\NPC$, and `?' means polynomial, $\NP$-complete, and unknown, respectively. Results of this paper are marked with $(\star)$.}
\label{tab:status}
\end{table}

The paper is organized as follows. We recall some notions and notations in Section~\ref{sec:pre} which will be used. Then, we prove a slightly stronger result than Theorem~\ref{thm:P14free} in Section~\ref{sec:P14free}. The proof of Theorem~\ref{thm:longholefree-dpm} will be given in Section~\ref{sec:longholefree-dpm}. Section~\ref{sec:con} concludes the paper. 

\section{Preliminaries}\label{sec:pre}
The neighborhood of a vertex $v$ in a graph $G$ is denoted $N_G(v)$, or simply $N(v)$ if the context is clear. We write $N[v]$ for $N(v)\cup\{v\}$.  
For a set $\cal H$ of graphs, ${\cal H}$-free graphs are those in which no induced subgraph is isomorphic to a graph in $\cal H$. 
We denote by~$P_t$ the $t$-vertex path with $t-1$ edges and by $C_t$ the $t$-vertex cycle with $t$ edges. The $3$-vertex cycle~$C_3$ is also called a triangle. 
The union $G+H$ of two vertex-disjoint graphs~$G$ and~$H$ is the graph with vertex set $V(G)\cup V(H)$ and edge set $E(G)\cup E(H)$; we write $pG$ for the union of~$p$ copies of~$G$. 
For a subset $S \subseteq V(G)$, let $G[S]$ denote the subgraph of~$G$ induced by~$S$; $G-S$ stands for $G[V(G)\setminus S]$.  
By \lq $G$ contains an $H$\rq\ we mean~$G$ contains~$H$ as an induced subgraph. 

Given a matching cut $M=(X,Y)$ of a graph $G$, a vertex set $S\subseteq V(G)$ is \emph{monochromatic} if all vertices of~$S$ belong to the same \emph{part} of $M$, i.e., $S\subseteq X$ or else $S\subseteq Y$. Notice that every clique with at least $3$ vertices and the vertex set of any subgraph isomorphic to the complete bipartite graph $K_{2,s}$ with $s\ge 3$ are monochromatic with respect to any matching cut. Moreover, if $S$ is monochromatic and~$v$ is a vertex outside $S$ adjacent to at least two vertices in $S$ then $S\cup\{v\}$ is monochromatic.  

If $G$ has a perfect matching cut $M=(X,Y)$ then $N[v]$ is not monochromatic for any vertex $v$. 
Indeed, every vertex $v\in X$ has exactly one neighbor in $Y$ which we sometimes call the \emph{private neighbor} of $v$ and denote $p(v)$, and vice versa, every vertex $w\in Y$ has its private neighbor $p(w)$ in $X$. Thus, for any vertex $v$, $p(p(v))=v$.

Algorithmic lower bounds in this paper are conditional, based on the Exponential Time Hypothesis (ETH)~\cite{ImpagliazzoP01}. The ETH asserts that no algorithm can solve \threeSAT\ in subexponential time $2^{o(n)}$ for $n$-variable \textsc{3-cnf} formulas: there exists a constant $\delta>0$ such that no algorithm can solve \threeSAT\ in $2^{\delta n}$ time for \textsc{3-cnf} formulas over $n$ variables. 
As shown by the Sparsification Lemma in~\cite{ImpagliazzoPZ01}, the hard cases of \threeSAT\ already consist of sparse formulas with $m=O(n)$ clauses. Hence, the ETH implies that \threeSAT\ cannot be solved in $2^{o(n+m)}$ time.  

Recall that an instance for \oneIN3SAT\ is  a \textsc{3-cnf} formula $\phi=C_1\land C_2\land \cdots\allowbreak \land C_m$ over $n$ variables, in which each clause $C_j$ consists of three distinct literals. The problem asks whether there is a truth assignment to the variables such that every clause in $\phi$ has \emph{exactly one} true literal. We call such an assignment a \emph{1-in-3 assignment}. 
There is a polynomial reduction from \threeSAT\ to \oneIN3SAT\ (\cite[Theorem 7.2]{Moret98}, see also~\cite{Schaefer78}), which transforms an instance for \threeSAT\ to an equivalent instance for \oneIN3SAT\, linear in the number of variables and clauses. Thus, assuming ETH, \oneIN3SAT\ cannot be solved in $2^{o(n+m)}$ time on inputs with~$n$ variables and~$m$ clauses. 
We will need a restriction of \oneIN3SAT, \P1IN3SAT, in which each variable occurs only positively. 
There is a well-known polynomial reduction from \oneIN3SAT\ to \P1IN3SAT, which transforms an instance for \oneIN3SAT\ to an equivalent instance for \P1IN3SAT, linear in the number of variables and clauses. (See also an exercise in \cite[Exercise 7.1]{Moret98}.) Hence, we obtain: assuming ETH, 
\P1IN3SAT\ cannot be solved in $2^{o(n+m)}$ time for inputs with $n$ variables and $m$ clauses.

\section{Hardness results: Proof of Theorem~\ref{thm:P14free}}\label{sec:P14free} 
Recall that a perfect matching cut is in particular a matching cut, as well as a disconnected perfect matching. 
This observation leads to the following promise versions of \MC, \PMC\ and \DPM. (We refer to~\cite{Goldreich06a,Goldreich2010} for background on promise problems.)

\medskip\noindent
\fbox{
\begin{minipage}{.96\textwidth}
\textsc{promise-pmc mc}\\[.7ex]
\begin{tabular}{l l}
{\em Instance:\/}& A graph $G$ that either has no matching cut, or every\\
                         & matching cut is a perfect matching cut.\\
{\em Question:\/}& Does $G$ have a matching cut\,?\\
\end{tabular}
\end{minipage}
}

\medskip\noindent
\fbox{
\begin{minipage}{.96\textwidth}
\textsc{promise-pmc pmc}\\[.7ex]
\begin{tabular}{l l}
{\em Instance:\/}& A graph $G$ that either has no perfect matching cut, or every\\
                         & matching cut is a perfect matching cut.\\
{\em Question:\/}& Does $G$ have a perfect matching cut\,?\\
\end{tabular}
\end{minipage}
}

\medskip\noindent
\fbox{
\begin{minipage}{.96\textwidth}
\textsc{promise-pmc dpm}\\[.7ex]
\begin{tabular}{l l}
{\em Instance:}& A graph $G$ that either has no disconnected perfect matching, \\
                         & or every matching cut is a perfect matching cut.\\
{\em Question:}& Does $G$ have a disconnected perfect matching\,?\\
\end{tabular}
\end{minipage}
}

\medskip\noindent
In all the promise versions above, we are allowed not to consider certain input graphs. 
In \textsc{promise-pmc mc}, \textsc{promise-pmc pmc} and \textsc{promise-pmc dpm}, we are allowed to ignore graphs having a matching cut that is not a perfect matching cut, for which \MC\ must answer \lq yes\rq, and \PMC\ and \DPM\ may answer \lq yes\rq\ or \lq no\rq. 

We slightly improve Theorem~\ref{thm:P14free} by showing the following result.
\begin{theorem}\label{thm:promise}
\emph{\textsc{promise-pmc mc}}, \emph{\textsc{promise-pmc pmc}} and \emph{\textsc{promise-pmc dpm}} are $\NP$-complete, even when restricted to $\{3P_6,2P_7,P_{14}\}$-free graphs. 
Moreover, under the ETH, no algorithm with runtime $2^{o(n)}$ can solve any of these problems for $n$-vertex $\{3P_6,2P_7,P_{14}\}$-free $8$-chordal graphs. 
\end{theorem}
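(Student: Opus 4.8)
The plan is to give a single polynomial-time reduction from \P1IN3SAT\ that simultaneously handles all three promise problems, so that the constructed graph either has no matching cut at all, or every matching cut in it is a perfect matching cut. Given a positive \textsc{3-cnf} formula $\phi=C_1\wedge\cdots\wedge C_m$ over variables $x_1,\dots,x_n$, I would build a graph $G_\phi$ with two distinguished "sides" of a mandatory cut (realized by adding two large cliques, or $K_{2,s}$-type gadgets with $s\ge 3$, which are monochromatic with respect to any matching cut) and then attach a variable gadget for each $x_i$ and a clause gadget for each $C_j$. The variable gadget is designed so that in any matching cut it is forced into one of exactly two monochromatic configurations, encoding $x_i=\mathrm{true}$ or $x_i=\mathrm{false}$; the clause gadget is wired to the three variable gadgets of its literals so that it admits a consistent extension of the cut precisely when exactly one of the three incident literals is set to true. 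The forcing is obtained by the standard monochromaticity facts recalled in the Preliminaries (cliques of size $\ge 3$, copies of $K_{2,s}$, and "a vertex with two neighbors in a monochromatic set joins it"). To make the cut always \emph{perfect}, every vertex of $G_\phi$ is given, inside its own gadget, a designated potential private neighbor on the opposite side, and the gadgets are sized with even, balanced parts so that whenever a matching cut exists the induced matching can be completed to a perfect matching across the cut; conversely any matching cut is forced to realize exactly this perfect pairing, which gives the promise "every matching cut is a perfect matching cut."

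The key steps, in order: (1) describe the skeleton — the two monochromatic anchors $A_X$, $A_Y$ and the global edges between them — and prove that in any matching cut $A_X\subseteq X$, $A_Y\subseteq Y$ (or symmetrically). (2) Describe and analyze the variable gadget: show it has exactly two matching-cut states, "true"/"false", each monochromatic on the appropriate anchor, and each internally a perfect matching cut of the gadget. (3) Describe and analyze the clause gadget: show that, given the states of its three literal-variables, the gadget extends the cut iff exactly one literal is true, again with an internal perfect pairing. (4) Put the pieces together: prove that $G_\phi$ has a matching cut $\iff$ $\phi$ has a 1-in-3 assignment, and that in the "yes" case the matching cut can be taken perfect; moreover argue that \emph{every} matching cut of $G_\phi$ is perfect (this is where one checks that no vertex can be left unmatched across the cut — every vertex's private neighbor is forced). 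This single construction then yields $\NP$-hardness of all three promise problems at once, since on $G_\phi$: \MC\ is "yes" iff $\phi$ is satisfiable in the 1-in-3 sense; and in the "yes" case the matching cut is perfect, so \PMC\ and \DPM\ are "yes" too, while in the "no" case all three are "no".

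For the structural restrictions, I would bound the length of induced paths and induced holes directly from the construction. Because variable and clause gadgets have bounded size and are glued along the two anchors, an induced $P_{14}$ or $3P_6$ or $2P_7$ would have to traverse too many gadget-to-anchor attachments; a careful but routine case analysis on how an induced path can enter, cross, and leave gadgets gives the $\{3P_6,2P_7,P_{14}\}$-free bound, and the same analysis on induced cycles gives $8$-chordality (no chordless cycle of length $\ge 9$). Finally, the ETH lower bound is immediate from the linear size of the reduction: since \P1IN3SAT\ on $n$ variables and $m=O(n)$ clauses has no $2^{o(n+m)}$ algorithm under ETH, and $G_\phi$ has $O(n+m)=O(n)$ vertices, a $2^{o(|V(G_\phi)|)}$ algorithm for any of the three promise problems would break ETH.

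The main obstacle I anticipate is designing the variable and clause gadgets so that \emph{three} properties hold simultaneously for every matching cut — (i) the cut is forced into the intended combinatorial state, (ii) the state is internally a \emph{perfect} matching cut (every gadget vertex has its private neighbor across), and (iii) the gadget sizes keep induced paths below $14$ vertices and induced cycles below length $9$. Properties (ii) and (iii) pull in opposite directions: ensuring perfect private neighbors tends to require extra vertices and longer internal paths, while the $P_{14}$-freeness forces the gadgets to be small and their attachment points few. Threading this needle — presumably by reusing anchor vertices aggressively as private neighbors and by keeping each gadget's "diameter into the anchor" to two or three edges — is the delicate part; the rest is bookkeeping.
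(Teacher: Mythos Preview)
Your high-level strategy matches the paper's exactly: a single reduction from \P1IN3SAT, two large monochromatic cliques forced into opposite sides, gadgets whose matching-cut states encode the 1-in-3 condition, the promise that every matching cut is perfect, and the linear-size ETH transfer. So the architecture is right.

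However, what you have written is a plan, not a proof. The entire content of the theorem lies in the concrete gadget that you explicitly leave undesigned, and your closing paragraph correctly identifies this as the obstacle without resolving it. The paper's construction is quite specific: for each clause $C_j$ it takes the graph $G(H;v)$ where $H$ is the $3$-cube, yielding a $14$-vertex gadget $G(C_j)$ with a clause vertex $c_j$, three variable vertices $c_{j1},c_{j2},c_{j3}$, a triangle $a_{j1}a_{j2}a_{j3}$, and a $6$-cycle on the $b_{jk}$ and $c'_{jk}$ vertices topped by $c_j'$. The two ``anchors'' are the clique $F$ on all $c_j,c_j'$ and the clique $T$ on all $a_{jk}$; there are no separate variable gadgets, only variable \emph{cliques} $Q(x)$ joining all occurrences of $x$. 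The $P_{14}$-freeness and $8$-chordality are proved by first showing $G-F-T$ is $P_6$-free and $G-T$ is $P_7$-free (so any induced $P_7$ meets $T$, any induced $P_6$ meets $F\cup T$), then arguing that a putative $P_{14}$ would force chords between $F$ and $T$. The ``every matching cut is perfect'' step is a short case analysis (their Lemma~\ref{lem:mc-in-G} and Lemma~\ref{lem:mc->pmc}) once one knows $F$ and $T$ lie on opposite sides.

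Two points where your sketch drifts from what actually works: first, you suggest per-variable gadgets with two internal states, but the paper needs only a clique on the occurrence vertices---the ``state'' is simply which side the clique falls on, and this economy is important for the $P_{14}$ bound. Second, you propose ensuring perfectness by giving each vertex a designated private neighbor inside its gadget; in the paper the private neighbor of, e.g., $c_j\in F$ is \emph{not} fixed in advance but is whichever $c_{jk}$ lands on the $T$-side, and the 1-in-3 structure is exactly what guarantees there is one. So the perfectness is a consequence of the combinatorics, not of a built-in pairing. Until you supply a gadget meeting all three of your constraints (i)--(iii), the proposal does not constitute a proof.
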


Clearly, Theorem~\ref{thm:promise} implies Theorem~\ref{thm:P14free}. 
Theorem~\ref{thm:promise} shows in particular that distinguishing between graphs without matching cuts and graphs in which every matching cut is a perfect matching cut is hard, and not only between those without matching cuts  and those with matching cuts which is implied by the $\NP$-completeness of \MC. Similar implications of Theorem~\ref{thm:promise} can be derived for \PMC\ and \DPM.

\subsection{The reduction}
We give a polynomial-time reduction from \P1IN3SAT\ to \textsc{promi\-se-pmc pmc} (and to \textsc{promise-pmc mc}, \textsc{promise-pmc dpm} at the same time). 

Let $\phi$ be a \textsc{3-cnf} formula with $m$ clauses $C_j$, $1\le j\le m$, and $n$ variables~$x_i$, $1\le i\le n$, in which each clause~$C_j$ consists of three distinct variables.  
We will construct a $\{3P_6,2P_7,P_{14}\}$-free $8$-chordal graph~$G$ such that~$G$ has a perfect matching cut if and only if $\phi$ admits a 1-in-3 assignment. Moreover, every matching cut of~$G$, if any, is a perfect matching cut.

A general idea in constructing a graph without long induced paths is to ensure that long paths must pass through few cliques, say at most two. The following observation gives hints on how to create such \lq path-blocking\rq\ cliques: consider a graph, in which the seven vertices $c, c_k, a_k$, $1\le k\le 3$, induce a tree with leaves $a_1,a_2,a_3$ and degree-2 vertices $c_1,c_2,c_3$ and the degree-3 vertex $c$. If the graph has a perfect matching cut, then $a_1, a_2,a_3$ must belong to the same part of the cut. Therefore, we can make $\{a_1,a_2,a_3\}$ adjacent to a clique and the resulting graph still has a perfect matching cut. 
Now, a gadget $G(H;v)$ may be obtained from a suitable graph~$H$ with $v\in V(H)$ as follows. Let~$H$ be a graph having a degree-3 vertex~$v$ with neighbors $b_1, b_2$ and~$b_3$. Let $G(H;v)$ be the graph obtained from $H-v$ by adding~7 new vertices $a_1,a_2,a_3,\allowbreak c_1,c_2,c_3$ and~$c$, and edges $cc_k$, $c_ka_k$, $a_kb_k$, $1\le k\le 3$, and $a_1a_2, a_1a_3$ and $a_2a_3$. (Thus, contracting the triangle $a_1a_2a_3$ from $G(H;v)-\{c,c_1,c_2,c_3\}$ we obtain the graph~$H$.) See Fig.~\ref{fig:G(H;v)}.  

\begin{figure}[ht]
\centering
\tikzstyle{vertex}=[draw,circle,inner sep=.8pt,fill=black] 
\begin{tikzpicture}[scale=.27] 

\filldraw[fill=hellgrau, opacity=0.35, draw opacity=0.15] (-1,0) rectangle (9,9);
\node at (4,7) {\small $H$}; 

\node[vertex] (v) at (4,1)  [label=right:{\small $v$}] {};
\node[vertex] (b1) at (1,4)   [label=left:{\small $b_{1}$}] {};
\node[vertex] (b2) at (4,4)  [label=right:{\small $b_{2}$}] {};  
\node[vertex] (b3) at (7,4)  [label=right:{\small $b_{3}$}] {};
\node (1) at (1,5)  {};
\node (2) at (2,5)  {}; 
\node (3) at (3,5)  {};
\node (4)  at (5,5) {}; 
\node (5) at (6,5)  {}; 
\node (6) at (7,5)  {}; 

\draw (v)--(b1); \draw (v)--(b2); \draw (v)--(b3); 
\draw[thin] (1)--(b1)--(2); \draw[thin] (3)--(b2)--(4); \draw[thin] (5)--(b3)--(6);
\end{tikzpicture} 
\quad
\begin{tikzpicture}[scale=.27] 

\filldraw[fill=hellgrau, opacity=0.35, draw opacity=0.15] (-1,9) rectangle (9,15);
\node at (4,13) {\small $H-v$}; 

\node[vertex] (a1) at (1,7)   [label=left:{\small $a_{1}$}] {};
\node[vertex] (a2) at (4,6) {}; 
\node at (5,5.5) {\small $a_{2}$};
\node[vertex] (a3) at (7,7)  [label=right:{\small $a_{3}$}] {};
\node[vertex] (b1) at (1,10)   [label=left:{\small $b_{1}$}] {};
\node[vertex] (b2) at (4,10)  [label=right:{\small $b_{2}$}] {};  
\node[vertex] (b3) at (7,10)  [label=right:{\small $b_{3}$}] {};
\node[vertex] (c1) at (1,4)   [label=left:{\small $c_{1}$}] {};
\node[vertex] (c2) at (4,4)  [label=right:{\small $c_{2}$}] {};  
\node[vertex] (c3) at (7,4)  [label=right:{\small $c_{3}$}] {};
\node[vertex] (c) at (4,1) [label=right:{\small $c$}] {};
\node (1) at (1,11)  {};
\node (2) at (2,11)  {}; 
\node (3) at (3,11)  {};
\node (4)  at (5,11) {}; 
\node (5) at (6,11)  {}; 
\node (6) at (7,11)  {}; 

\draw (c)--(c1); \draw (c)--(c2); \draw (c)--(c3); 
\draw (c1)--(a1)--(b1); \draw (c2)--(a2)--(b2); \draw (c3)--(a3)--(b3); 
\draw (a1)--(a2)--(a3)--(a1);
\draw[thin] (1)--(b1)--(2); \draw[thin] (3)--(b2)--(4); \draw[thin] (5)--(b3)--(6);
\end{tikzpicture} 
\quad
\begin{tikzpicture}[scale=.27] 
\node[vertex] (c1) at (4,1)  [label=right:{\small $c$}] {};
\node[vertex] (c11) at (1,4)   [label=left:{\small $c_{1}$}] {};
\node[vertex] (c12) at (4,4)  [label=right:{\small $c_{2}$}] {};  
\node[vertex] (c13) at (7,4)  [label=right:{\small $c_{3}$}] {};
\node[vertex] (a11) at (1,7)  [label=left:{\small $a_{1}$}] {};
\node[vertex] (a12) at (4,6)  {}; 
\node at (5,5.5) {\small $a_{2}$};
\node[vertex] (a13) at (7,7)  [label=right:{\small $a_{3}$}] {};
\node[vertex] (b11)  at (1,10) {}; 
\node[vertex] (b12) at (4,10)  {};
\node[vertex] (b13) at (7,10)  {}; 
\node[vertex] (1)  at (2,12) {}; 
\node[vertex] (2) at (6,12)  {}; 
\node[vertex] (3) at (2.5,15)  {}; 
\node[vertex] (4)  at (1,17) {}; 
\node[vertex] (5)  at (7,17) {}; 
\node[vertex] (6)  at (5.5,15) {}; 

\draw (c1)--(c11); \draw (c1)--(c12); \draw (c1)--(c13); 
\draw (c11)--(a11)--(b11)--(1)--(2)--(3)--(4)--(5)--(6)--(1);
\draw (c12)--(a12)--(b12)--(3); \draw (b12)--(6); \draw (b11)--(4);
\draw (c13)--(a13)--(b13)--(2); \draw (b13)--(5);
\draw (a11)--(a12)--(a13)--(a11);
\end{tikzpicture} 
\quad
\begin{tikzpicture}[scale=.27] 
\node[vertex] (c1) at (4,1)  [label=right:{\small $c$}] {};
\node[vertex] (c11) at (1,4)   [label=left:{\small $c_{1}$}] {};
\node[vertex] (c12) at (4,4)  [label=right:{\small $c_{2}$}] {};  
\node[vertex] (c13) at (7,4)  [label=right:{\small $c_{3}$}] {};
\node[vertex] (a11) at (1,7)  [label=left:{\small $a_{1}$}] {};
\node[vertex] (a12) at (4,6)  {}; 
\node at (5,5.5) {\small $a_{2}$};
\node[vertex] (a13) at (7,7)  [label=right:{\small $a_{3}$}] {};
\node[vertex] (b11)  at (1,10) {}; 
\node[vertex] (b12) at (4,10)  {}; 
\node[vertex] (b13) at (7,10)  {}; 
\node[vertex] (1)  at (2,12) {}; 
\node[vertex] (2) at (3.3,12)  {}; 
\node[vertex] (3) at (4.7,12)  {}; 
\node[vertex] (4)  at (6,12) {}; 
\node[vertex] (5)  at (7,15) {}; 
\node[vertex] (6)  at (1,15) {}; 

\draw (c1)--(c11); \draw (c1)--(c12); \draw (c1)--(c13); 
\draw (c11)--(a11)--(b11)--(1)--(2)--(b12)--(3)--(4)--(b13)--(5)--(6)--(b11);
\draw (a12)--(b12); \draw (a13)--(b13); \draw (c12)--(a12); \draw (c13)--(a13);
\draw (a11)--(a12)--(a13)--(a11);
\end{tikzpicture} 
\caption{From left to right: the graph $H$ with a degree-3 vertex $v$, the graph $G(H;v)$, the graph $G(H;v)$ where $H$ is the Petersen graph, and the graph $G(H;v)$ where $H$ is the Heggernes-Telle graph.
}\label{fig:G(H;v)}
\end{figure}
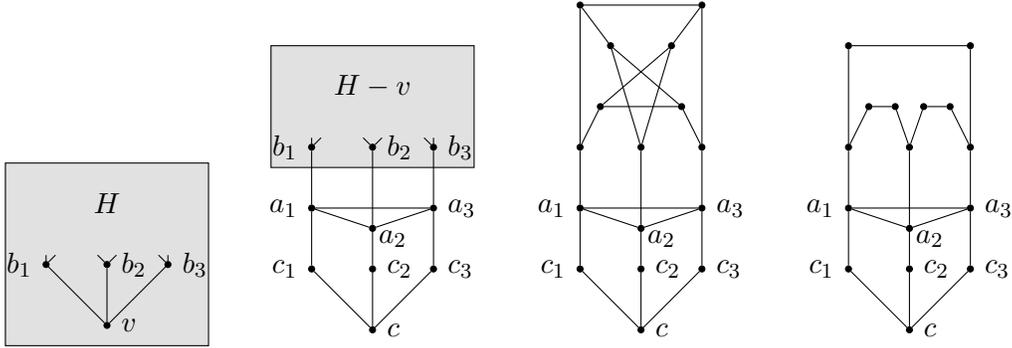

\begin{observation}\label{obs:H}
Assume that, for \emph{any} neighbor $b_i$ of $v$ in $H$, $H$ has a perfect matching cut $(X,Y)$ such that $v\in X$ and $b_i\in Y$. Then, for \emph{any} neighbor $d$ of $c$ in $G(H;v)$, the graph $G(H;v)$ has a perfect matching cut $(X',Y')$ such that $c\in X'$ and $d\in Y'$.
\end{observation}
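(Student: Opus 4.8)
The plan is to lift a perfect matching cut of $H$ to one of $G(H;v)$ by prescribing a fixed $2$-coloring of the seven gadget vertices $c,c_1,c_2,c_3,a_1,a_2,a_3$ and gluing it to a suitably oriented perfect matching cut of $H$ along the interface $b_1,b_2,b_3$. Since $N_{G(H;v)}(c)=\{c_1,c_2,c_3\}$, the given neighbor $d$ of $c$ is some $c_k$; as the gadget is symmetric in its three branches and the hypothesis is symmetric in the neighbors of $v$, it suffices to treat $d=c_1$.

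I would invoke the hypothesis with the neighbor $w=b_1$ of $v$ to obtain a perfect matching cut $(X,Y)$ of $H$ with $v\in X$ and $b_1\in Y$, and record two facts for later use: (i) $b_2,b_3\in X$, because $b_1$ is the unique neighbor of $v$ lying in $Y$; and (ii) since $p(v)=b_1$ forces $p(b_1)=v$, every neighbor of $b_1$ inside $H-v$ lies in $Y$, whereas $b_2$ and $b_3$ (being in $X$) each have their unique opposite neighbor $p(b_2)$, resp.\ $p(b_3)$, in $Y$, and this neighbor belongs to $V(H-v)$ since $v\in X$. I would then set
\[
X'=\{c,c_2,c_3\}\cup Y,\qquad Y'=\{c_1,a_1,a_2,a_3\}\cup\bigl(X\setminus\{v\}\bigr),
\]
which is a partition of $V(G(H;v))$ because $V(H-v)=Y\cup(X\setminus\{v\})$ with $Y$ and $X\setminus\{v\}$ disjoint. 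On $H-v$ this coloring is the restriction of the ``flipped'' cut $(Y,X)$; the flip is exactly what makes the interface vertices match the colors the gadget forces on them.

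It then remains to verify that $E(X',Y')$ is a perfect matching, i.e.\ that every vertex has exactly one neighbor on the opposite side. For the seven gadget vertices this is immediate from the edges $cc_k$, $c_ka_k$, $a_kb_k$ and $a_1a_2,a_1a_3,a_2a_3$: the unique opposite neighbor of $c$ is $c_1$, of $c_1$ is $c$, of $c_2$ is $a_2$, of $c_3$ is $a_3$, of $a_1$ is $b_1$, of $a_2$ is $c_2$, and of $a_3$ is $c_3$. For the interface, fact (ii) shows that $b_1\in X'$ has $a_1$ as its only opposite neighbor (all of its $H-v$-neighbors have moved to $X'$), while each of $b_2,b_3\in Y'$ keeps the single opposite neighbor it had in $H$, its gadget neighbor $a_k$ now lying on its own side. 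For every remaining vertex $u$ of $H-v$ we have $N_{G(H;v)}(u)=N_H(u)$, and $u$ keeps its single opposite neighbor from $(X,Y)$ since the flip is a global relabeling on $H-v$. Thus $(X',Y')$ is a perfect matching cut of $G(H;v)$ with $c\in X'$ and $d=c_1\in Y'$. The delicate point — and the step I expect to be the main obstacle — is the interface analysis; its crux is the identity $p(p(v))=v$, which certifies that the entire neighborhood of $b_1$ inside $H-v$ is monochromatic, so that the $c_1$-branch of the gadget can draw $b_1$ onto $c$'s side without creating a second crossing edge at $b_1$.
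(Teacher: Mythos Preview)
Your argument is correct. The paper states Observation~\ref{obs:H} without proof, so there is no ``paper's own proof'' to compare against; your explicit lift-and-flip construction fills in exactly the verification the authors left implicit. One minor remark on phrasing: when you invoke symmetry to reduce to $d=c_1$, you are not using an automorphism of $G(H;v)$ (which need not permute the $b_k$), but only that the same index-relabelled argument goes through for $d=c_2$ or $d=c_3$ because the hypothesis supplies a perfect matching cut of $H$ with $v\in X$ and $b_k\in Y$ for each $k$. That is fine, but it may be worth saying explicitly.
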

Examples of graphs $H$ in Observation~\ref{obs:H} include the cube, the Petersen graph and the 10-vertex Heggernes-Telle graph in~\cite[Fig. 3.1 with $v=a_i^1$]{HeggernesT98}. 
Our gadget will be obtained by taking the cube.\footnote{Taking the Petersen graph or the Heggernes-Telle graph will produce induced $P_{t}$ for some $t\ge 15$. If there exists another graph~$H$ \lq better\rq\ than the cube, then our construction will yield a $P_t$-free graph for some~$10\le t\le 13$.} 

We now formally describe the reduction. For each clause $C_j$ consisting of three variables $c_{j1}, c_{j2}$ and $c_{j3}$, let $G(C_j)$ be the graph depicted in Fig.~\ref{fig:P14freeGadget}, the graph $G(H;v)$ with the cube~$H$. We call $c_j$ and~$c_j'$ the \emph{clause vertices}, and $c_{j1}$, $c_{j2}$ and~$c_{j3}$ the \emph{variable vertices}. 
Then, the graph~$G$ is obtained from all $G(C_j)$ by adding 
\begin{itemize}
\item all possible edges between variable vertices $c_{jk}$ and $c_{j'k'}$ of the same variable. Thus, for each variable~$x$, 
\begin{align*}
Q(x)=\,&\{c_{jk}\mid 1\le j\le m, 1\le k\le 3, \text{$x$ occurs in clause $C_j$ as $c_{jk}$}\}
\end{align*} 
is a clique in $G$, 
\end{itemize}
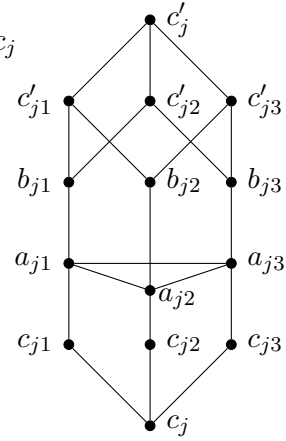
\begin{wrapfigure}{r}{.27\textwidth}
\vspace{-7.8cm} 
\centering
\,\,\begin{tikzpicture}[scale=.36] 
\tikzstyle{vertexS}=[draw,circle,inner sep=2pt,fill=black]
\tikzstyle{vertex}=[draw,circle,inner sep=1.3pt,fill=black] 
\node[vertex] (c1) at (4,1)  [label=right:{\small $c_j$}] {};
\node[vertex] (c11) at (1,4)   [label=left:{\small $c_{j1}$}] {};
\node[vertex] (c12) at (4,4)  [label=right:{\small $c_{j2}$}] {};  
\node[vertex] (c13) at (7,4)  [label=right:{\small $c_{j3}$}] {};
\node[vertex] (a11) at (1,7)  [label=left:{\small $a_{j1}$}] {};
\node[vertex] (a12) at (4,6)  {}; 
\node at (5.25,5.6) {\small $a_{j2}$};
\node[vertex] (a13) at (7,7)  [label=right:{\small $a_{j3}$}] {};
\node[vertex] (b11)  at (1,10) [label=left:{\small $b_{j1}$}] {};
\node[vertex] (b12) at (4,10)  [label=right:{\small $b_{j2}$}] {};
\node[vertex] (b13) at (7,10)  [label=right:{\small $b_{j3}$}] {};
\node[vertex] (c11')  at (1,13) [label=left:{\small $c_{j1}'$}] {};
\node[vertex] (c12') at (4,13)  [label=right:{\small $c_{j2}'$}] {};
\node[vertex] (c13') at (7,13)  [label=right:{\small $c_{j3}'$}] {};
\node[vertex] (c1')  at (4,16) [label=right:{\small $c_{j}'$}] {};

\draw (c1)--(c11); \draw (c1)--(c12); \draw (c1)--(c13); 
\draw (c11)--(a11)--(b11)--(c11')--(c1');
\draw (c12)--(a12)--(b12)--(c11'); \draw (b12)--(c13'); \draw (c1')--(c12');
\draw (c13)--(a13)--(b13)--(c13')--(c1');
\draw (b11)--(c12')--(b13);
\draw (a11)--(a12)--(a13)--(a11);
\end{tikzpicture} 
\caption{The gadget $G(C_j)$. 
}\label{fig:P14freeGadget}
\end{wrapfigure}
\begin{itemize}

\item all possible edges between the~$2m$ clause vertices~$c_{j}$\\ and~$c_{j}'$. 
Thus, 
\begin{align*}
F=\, &\{c_j\mid 1\le j\le m\}\cup\{c_j'\mid 1\le j\le m\}
\end{align*} 
is a clique in $G$, 

\item all possible edges between the~$3m$ vertices~$a_{jk}$.\\ 
Thus, 
\begin{align*}
T=\{a_{jk}\mid 1\le j\le m, 1\le k\le 3\}
\end{align*} 
is a clique  in $G$.
\end{itemize}

\medskip
\noindent
The description of $G$ is complete. As an example, the graph $G$ from the formula~$\phi$ with three clauses $C_1=\{x,y,z\}, C_2=\{u,z,y\}$ and $C_3=\{z,v,w\}$ is depicted in Fig.~\ref{fig:P14freeExample1}.

\begin{figure}[ht]
\begin{center}
\tikzstyle{vertexS}=[draw,circle,inner sep=2pt,fill=black]
\tikzstyle{vertex}=[draw,circle,inner sep=1.3pt,fill=black] 
\tikzstyle{vertexR}=[draw=black,circle,inner sep=1.3pt,fill=flax] 
\tikzstyle{vertexB}=[draw=black,circle,inner sep=1.3pt,fill=teal] 

\begin{tikzpicture}[scale=.36] 
\filldraw[fill=teal, opacity=0.10, draw opacity=0.15] (-0.5,5.1) rectangle (27,8);
\node at (0,5.6) {\small \textcolor{gray}{$T$}}; 
\filldraw[fill=flax, opacity=0.20, draw opacity=0.15] (1.5,15.3) rectangle (24.5,16.8);
\node at (2,15.8) {\small \textcolor{gray}{$F$}}; 
\filldraw[fill=flax, opacity=0.20, draw opacity=0.15] (1.5,0.3) rectangle (24.5,1.8);
\node at (2,0.8) {\small \textcolor{gray}{$F$}}; 

\node[vertexR] (c1) at (4,1)  [label=right:{\small $c_{1}$}]{};
\node[vertex] (c11) at (1,4)  [label=right:{\small $x$}] {};
\node[vertex] (c12) at (4,4)  [label=left:{\small $y$}]{};  
\node[vertex] (c13) at (7,4)  [label=left:{\small $z$}] {};
\node[vertexB] (a11) at (1,7)  {};
\node at (1.95,7.5) {\small $a_{11}$};
\node[vertexB] (a12) at (4,6)  {};
\node at (4.95,5.5) {\small $a_{12}$};
\node[vertexB] (a13) at (7,7)  {}; 
\node at (7.95,7.5) {\small $a_{13}$};
\node[vertex] (b11)  at (1,10) [label=right:{\small $b_{11}$}] {};
\node[vertex] (b12) at (4,10)  [label=right:{\small $b_{12}$}] {};
\node[vertex] (b13) at (7,10)  [label=right:{\small $b_{13}$}] {};
\node[vertex] (c11')  at (1,13) [label=right:{\small $c_{11}'$}] {};
\node[vertex] (c12') at (4,13)  [label=right:{\small $c_{12}'$}] {};
\node[vertex] (c13') at (7,13)  [label=right:{\small $c_{13}'$}] {};
\node[vertexR] (c1')  at (4,16) [label=right:{\small $c_{1}'$}]{};

\draw (c1)--(c11); \draw (c1)--(c12); \draw (c1)--(c13); 
\draw (c11)--(a11)--(b11)--(c11')--(c1');
\draw (c12)--(a12)--(b12)--(c11'); \draw (b12)--(c13'); \draw (c1')--(c12');
\draw (c13)--(a13)--(b13)--(c13')--(c1');
\draw (b11)--(c12')--(b13);
\draw (a11)--(a12)--(a13)--(a11);

\node[vertexR] (c2) at (13,1)  [label=right:{\small $c_{2}$}] {};
\node[vertex] (c21) at (10,4)   [label=right:{\small $u$}] {};
\node[vertex] (c22) at (13,4)  [label=right:{\small $z$}] {}; 
\node[vertex] (c23) at (16,4)  [label=right:{\small $y$}] {};
\node[vertexB] (a21) at (10,7)  {};
\node at (10.95,7.5) {\small $a_{21}$};
\node[vertexB] (a22) at (13,6)  {};
\node at (13.95,5.5) {\small $a_{22}$};
\node[vertexB] (a23) at (16,7)  {};
\node at (16.95,7.5) {\small $a_{23}$};
\node[vertex] (b21)  at (10,10) [label=right:{\small $b_{21}$}]{};
\node[vertex] (b22) at (13,10)  [label=right:{\small $b_{22}$}]{};
\node[vertex] (b23) at (16,10)  [label=right:{\small $b_{23}$}]{};
\node[vertex] (c21')  at (10,13)  [label=right:{\small $c_{21}'$}] {}; 
\node[vertex] (c22') at (13,13)   [label=right:{\small $c_{22}'$}] {};
\node[vertex] (c23') at (16,13)   [label=right:{\small $c_{23}'$}] {};
\node[vertexR] (c2')  at (13,16) [label=right:{\small $c_{2}'$}]{};

\draw (c2)--(c21); \draw (c2)--(c22); \draw (c2)--(c23); 
\draw (c21)--(a21)--(b21)--(c21')--(c2');
\draw (c22)--(a22)--(b22)--(c21'); \draw (b22)--(c23'); \draw (c2')--(c22');
\draw (c23)--(a23)--(b23)--(c23')--(c2');
\draw (b21)--(c22')--(b23);
\draw (a21)--(a22)--(a23)--(a21);

\node[vertexR] (c3) at (22,1)   [label=right:{\small $c_{3}$}] {};
\node[vertex] (c31) at (19,4)   [label=right:{\small $z$}] {};
\node[vertex] (c32) at (22,4) [label=right:{\small $v$}] {};  
\node[vertex] (c33) at (25,4)    [label=right:{\small $w$}] {};
\node[vertexB] (a31) at (19,7)  {};
\node at (19.95,7.5) {\small $a_{31}$};
\node[vertexB] (a32) at (22,6)  {};
\node at (22.95,5.5) {\small $a_{32}$};
\node[vertexB] (a33) at (25,7)  {};
\node at (25.95,7.5) {\small $a_{33}$};
\node[vertex] (b31)  at (19,10) [label=right:{\small $b_{31}$}]{};
\node[vertex] (b32) at (22,10)  [label=right:{\small $b_{32}$}]{};
\node[vertex] (b33) at (25,10)  [label=right:{\small $b_{33}$}]{};
\node[vertex] (c31')  at (19,13) [label=right:{\small $c_{31}'$}]{};
\node[vertex] (c32') at (22,13)  [label=right:{\small $c_{32}'$}]{};
\node[vertex] (c33') at (25,13)  [label=right:{\small $c_{33}'$}]{};
\node[vertexR] (c3')  at (22,16) [label=right:{\small $c_{3}'$}]{};

\draw (c3)--(c31); \draw (c3)--(c32); \draw (c3)--(c33); 
\draw (c31)--(a31)--(b31)--(c31')--(c3');
\draw (c32)--(a32)--(b32)--(c31'); \draw (b32)--(c33'); \draw (c3')--(c32');
\draw (c33)--(a33)--(b33)--(c33')--(c3');
\draw (b31)--(c32')--(b33);
\draw (a31)--(a32)--(a33)--(a31);

\draw (c12)[bend angle=30, bend right]to (c23); 
\draw (c13)[bend angle=28, bend left]to (c22); \draw (c22)[bend angle=30, bend left]to (c31);
\draw (c13)[bend angle=25, bend right]to (c31);

\end{tikzpicture} 
\caption{The graph $G$ from the formula $\phi$ with three clauses $C_1=\{x,y,z\}$, $C_2=\{u,z,y\}$ and $C_3=\{z,v,w\}$.  
The 6 flax vertices $c_1, c_2, c_3, c_1', c_2', c_3'$ and the 9 teal vertices $a_{11}$, $a_{12}$, $a_{13}$, $a_{21}$, $a_{22}$, $a_{23}$, $a_{31}$, $a_{32}$, $a_{33}$ form the clique $F$ and $T$, respectively.}\label{fig:P14freeExample1}
\end{center}
\end{figure}
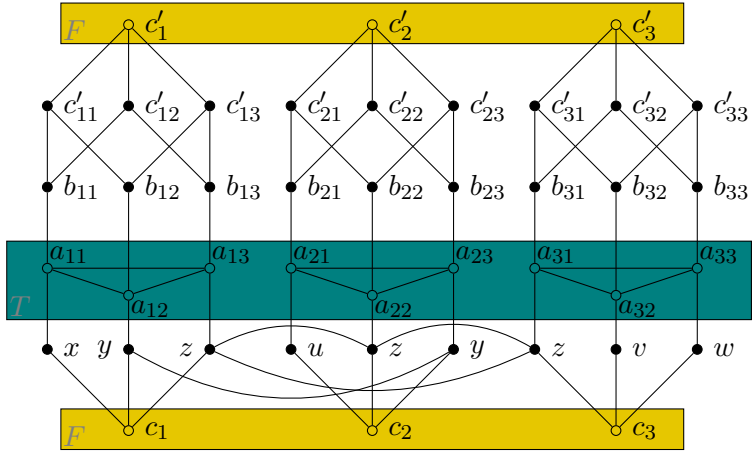

Notice that no edge exists between the two \lq path-blocking\rq\ cliques $F$ and~$T$. 
Notice also that $G-F-T$ has exactly $m+n$ components: 

\begin{itemize}
\item For each $1\le j\le m$, the $6$-cycle $D_j:\, b_{j1}, c_{j1}', b_{j2}, c_{j3}', b_{j3}, c_{j2}'$
is a component of $G-F-T$, call it the clause $6$-cycle (of clause $C_j$);  
\item For each variable $x$, the clique $Q(x)$
is a component of $G-F-T$, call it the variable clique (of variable $x$).
\end{itemize}

\begin{lemma}\label{lem:P14free}
$G$ is $\{3P_6,2P_7,P_{14}\}$-free and $8$-chordal.
\end{lemma}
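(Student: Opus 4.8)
The strategy is to analyze induced paths (and induced cycles) by tracking how they interact with the two "path-blocking" cliques $F$ and $T$ and the $m+n$ components of $G-F-T$ (the clause $6$-cycles $D_j$ and the variable cliques $Q(x)$). The key structural fact to establish first is a bound on how long an induced path can stay inside a single component of $G-F-T$ together with its attachment to $F\cup T$. Concretely, within one gadget $G(C_j)$ the vertices split into levels: the clause vertices $c_j,c_j'$ (in $F$), the $a_{jk}$ (in $T$), the $b_{jk}$, the $c_{jk}'$ (forming $D_j$), and the variable vertices $c_{jk}$ (in the $Q(x)$'s). An induced path cannot use two vertices of the same clique non-consecutively, so it uses at most two vertices of $F$, at most two of $T$, and at most two of each $Q(x)$ and at most... a short subpath of each $D_j$. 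The plan is to show that an induced path restricted to $V(G(C_j))\setminus(F\cup T)$ together with at most one entry/exit through each of $F$ and $T$ has bounded length, then count how many gadgets such a path can traverse.

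The main argument: let $P$ be an induced path in $G$. Since $F$ is a clique, $P\cap F$ induces a subpath of $P$ with at most $2$ vertices; likewise $P\cap T$ has at most $2$ vertices. Deleting $V(F)\cup V(T)$ from $P$ leaves at most $3$ subpaths, each lying in a single component of $G-F-T$ (because distinct components are non-adjacent, and a subpath of $P$ between two consecutive $F\cup T$-vertices is connected). For a subpath inside a variable clique $Q(x)$: it has at most $2$ vertices. For a subpath inside a clause $6$-cycle $D_j$: an induced subpath of $C_6$ has at most $4$ vertices, but one must also check the chords $b_{j1}c_{j2}'b_{j3}$ — actually $D_j$ as I listed it is an induced $6$-cycle in $G-F-T$, so at most $4$ vertices. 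However, the subpath might also pick up the $b_{jk}$'s neighbors: note $b_{jk}$ is adjacent to $a_{jk}\in T$ and to $c_{jk}'$'s; and the $c_{jk}$ variable vertices are adjacent to $c_j\in F$ and to $a_{jk}\in T$. So once we fix the (at most $2$) vertices of $P$ in $F$ and the (at most $2$) in $T$, each such fixed vertex has neighbors in only boundedly many components, limiting the number of "active" components to at most $6$; combining the per-component bounds ($\le 2$ in a $Q(x)$, $\le 4$ in a $D_j$, plus the $b_{jk}$-level and $c_{jk}$-level vertices which are each dominated by $F\cup T$) yields $|V(P)|\le 13$, i.e.\ $G$ is $P_{14}$-free. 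The $3P_6$- and $2P_7$-freeness come from the same bookkeeping applied to a disjoint union: three vertex-disjoint induced $P_6$'s (or two disjoint $P_7$'s) would need to collectively avoid overusing $F$ and $T$, but each $P_6$ living mostly outside $F\cup T$ cannot fit in a single component ($Q(x)$ has no induced $P_3$ since it is a clique; $D_j$ has no induced $P_5$), forcing each to use $F$ or $T$, and there isn't enough room — each of $F,T$ contributes at most $2$ vertices total across all the disjoint paths, so at most two of the disjoint paths can be "fed" and the third (resp.\ second) is too short.

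For $8$-chordality: an induced cycle $C$ is handled almost identically — $C\cap F$ and $C\cap T$ are each cliques, hence each contributes at most $2$ vertices to $C$ (a cycle has no induced sub*path* of $3$ vertices inside a clique either, same reason), deleting them leaves at most $4$ arcs distributed among components of $G-F-T$; using $\le 2$ per variable clique, $\le 4$ per clause $6$-cycle, and the domination of the $b$- and $c_{jk}$-levels by $F\cup T$, we get $|V(C)|\le 8$. I expect the main obstacle to be the careful case analysis of exactly which combinations of $F$- and $T$-vertices a long induced path can simultaneously contain while remaining induced (the chords inside $D_j$ and the clique edges inside $Q(x)$, $F$, $T$ all impose non-adjacency constraints that interact), and making the counting tight enough to land at $P_{14}$ rather than some larger $P_t$; organizing this by first reducing to "a path with a bounded number of vertices outside $F\cup T$" should keep it manageable.
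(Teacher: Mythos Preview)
Your framework is right --- track how an induced path meets the cliques $F$, $T$ and the components of $G-F-T$ --- but the counting as sketched cannot reach the tight bounds. Two concrete errors: an induced path in the $6$-cycle $D_j$ can have $5$ vertices, not $4$ (and $D_j$ does contain induced $P_5$'s); and for an induced cycle, removing $C\cap F$ and $C\cap T$ leaves at most \emph{two} arcs, not four. With the corrected per-component bound of $5$, your naive path count is $3\cdot 5+2+2=19$, and for cycles $2+2+5+5=14$; the sentence ``combining the per-component bounds \dots\ yields $|V(P)|\le 13$'' is precisely where the argument is missing, and the hint that each $F$- or $T$-vertex sees boundedly many components does not by itself close this gap.

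What is missing is the asymmetric lemma that $G-T$ is $P_7$-free. This is where the specific attachment of $F$ matters: $c_j'$ is adjacent to \emph{all three} of $c_{j1}',c_{j2}',c_{j3}'$, so any induced path that enters $D_j$ from $c_j'$ can take at most two further steps before creating a chord back to $c_j'$; likewise $(G-T)[Q(x)\cup F]$ is co-bipartite. Once you have this, everything is two lines: every induced $P_7$ meets $T$ (hence no $2P_7$); in a putative $P_{14}=v_1\cdots v_{14}$ both halves meet $T$, forcing $v_7,v_8\in T$, whence $P[v_1,v_6]$ and $P[v_9,v_{14}]$ are $P_6$'s in $G-T$ and must each meet $F$, giving an $F$-$F$ chord; and any induced cycle of length $\ge 9$ meets $T$ and loses at most two vertices there, leaving an induced $P_{\ge 7}$ in $G-T$. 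Your $2P_7$ argument in particular cannot be completed symmetrically: $G-F$ \emph{does} contain induced $P_7$'s (for instance $b_{23},c_{23}',b_{22},c_{21}',b_{21},a_{21},a_{11},c_{11}$ is an induced $P_8$ in the example), so one of the two disjoint $P_7$'s can legitimately live in $G-F$ using $T$; the contradiction only comes from the \emph{other} one being forced into $G-T$.
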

\begin{proof}
First, observe that each component of $G-F-T$ is a clause $6$-cycle $D_j$ or a variable clique $Q(x)$. Hence,  
\begin{align}\label{eq:1}
&\text{$G-F-T$ is $P_6$-free.}
\end{align}
Therefore, every induced $P_6$ in $G$ must contain a vertex from the clique~$F$ or from the clique~$T$. This shows that $G$ is $3P_6$-free. 

Observe next that, for each $j$, $c_j'\in F$ is the cut-vertex in $G-T$ separating the clause $6$-cycle $D_j$ and $F$, and $N(c_j')\cap D_j =\{c_{j1}', c_{j2}', c_{j3}'\}$. 
Observe also that, for each~$x$, $(G-T)[Q(x)\cup F]$ is a co-bipartite graph, the complement of a bipartite graph. Hence, it can be verified immediately that
\begin{align}\label{eq:2}
&\text{$G-T$ is $P_7$-free.} 
\end{align}
Fact~(\ref{eq:2}) implies that every induced $P_7$ in $G$ must contain a vertex from the clique $T$. This shows that $G$ is $2P_7$-free.

We now are ready to argue that $G$ is $P_{14}$-free. Suppose not and let $P: v_1,v_2,\allowbreak \ldots, v_{14}$ be an induced $P_{14}$ in $G$, with edges $v_iv_{i+1}$, $1\le i<14$. For $i< j$, write $P[v_i,v_j]$ for the subpath of~$P$ between (and including) $v_i$ and~$v_j$. 
Then, by~(\ref{eq:2}), each of $P[v_1,v_7]$ and $P[v_8,v_{14}]$ contains a vertex from the clique $T$. Since~$P$ has no chords, $P[v_1,v_7]$ has only the vertex~$v_7$ in $T$ and $P[v_8,v_{14}]$ has only the vertex~$v_8$ in $T$. By~(\ref{eq:1}), therefore, both $P[v_1,v_6]$ and $P[v_9,v_{14}]$ contain some vertex in the clique $F$, and thus~$P$ has a chord. 
This contradiction shows that~$G$ is $P_{14}$-free, as claimed.

We now show that $G$ is $8$-chordal. First observe that, if $S$ is a clique cutset in (an arbitrary graph) $G$ then any induced cycle in $G$ is contained in $G[A\cup S]$ for some connected component $A$ of $G-S$.  With this observation, we have
\begin{align}\label{eq:3}
& \text{$G-T$ has no induced cycles $C_k$ of length $k\ge 7$.}
\end{align}
This can be seen as follows:  
write $F=C\cup C'$ with $C=\{c_j\mid 1\le j\le m\}$ and $C'=\{c_j'\mid 1\le j\le m\}$. Then $C$ is a clique cutset of $G-T$ separating each variable clique $Q(x)$ and the connected component $H$ consisting of $C'$ and all clause $6$-cycles~$D_j$. Hence each induced cycle in $G-T$ is contained in some $Q(x)\cup C$ or in $H\cup C$. Since $Q(x)\cup C$ induces a co-bipartite graph, it has no induced cycle of length larger than~$4$. Furthermore, $C'$ is a clique cutset of the graph induced by $H\cup C$ separating each~$D_j$ and~$C$. Hence each induced cycle in $H\cup C$ is contained in some $D_j\cup C'$ or in $C\cup C'=F$. In each $D_j\cup C'$, $c_j'$ is the cut-vertex separating $D_j$ and $C'\setminus\{c_j'\}$. Since~$c_j'$ is adjacent to three vertices of the $6$-cycle $D_j$ and $F$ is a clique, $D_j\cup C'$ and hence $H\cup C'$ has no induced cycles of length larger than~$6$. 
Thus, (\ref{eq:3}) is proved.

Now, suppose for the contrary that $G$ has an induced cycle $Z$ of length at least~ $9$. Then, by (\ref{eq:3}), $Z\cap T\not=\emptyset$. Since $T$ is a clique, $Z\cap T$ is a vertex or two adjacent vertices. But then $Z-T$ is an induced path in $G-T$ with at leat~$7$ vertices, contradicting (\ref{eq:2}). Thus, $G$ is $8$-chordal.
\qed
\end{proof}

We remark that there are many induced $P_{13}$ and $C_8$ in $G$; for instance, some $P_{13}$ in the graph depicted in Fig.~\ref{fig:P14freeExample1} are: 
\begin{itemize}
\item $b_{11}, c_{11}', b_{12}, c_{13}', b_{13}, a_{13}, a_{22}, c_{22}=z, c_{31}=z, c_3, c_1, c_{12}=y, c_{23}=y$;
\item $b_{11}, c_{11}', b_{12}, c_{13}', b_{13}, a_{13}, a_{21}, b_{21}, c_{21}', c_2', c_3', c_{33}', b_{33}$;
\item $b_{11}, c_{11}', b_{12}, c_{13}', b_{13}, a_{13}, a_{21}, b_{21}, c_{21}', c_2', c_2, c_{23}=y, c_{12}=y$, 
\end{itemize}

and some $C_8$ are:
\begin{itemize}
\item $a_{13}, a_{21}, b_{21}, c_{21}', c_{2}', c_{1}', c_{13}', b_{13}$; 
\item $a_{22}, a_{32}, b_{32}, c_{33}', c_{3}', c_{3}, c_{13}=z, c_{22}=z$, and 
\item $a_{12}, c_{12}=y, c_{23}=y, c_, c_3, c_{31}=z, c_{13}=z, a_{13}$.  
\end{itemize}

\begin{lemma}\label{lem:mc-in-G}
For any matching cut $M=(X,Y)$ of $G$, 
\begin{itemize}
\item[\em (i)] $F$ and $T$ are contained in different parts of $M$;
\item[\em (ii)] if $F\subseteq X$, then $|\{c_{j1}, c_{j2}, c_{j3}\}\cap Y|=1$, and if $F\subseteq Y$, then $|\{c_{j1}, c_{j2},\allowbreak c_{j3}\}\cap X|=1$;
\item[\em (iii)] for any variable $x$, $Q(x)$ is monochromatic; 
\item[\em (iv)]  if $F\subseteq X$, then $|\{b_{j1}, b_{j2}, b_{j3}\}\cap Y|=2$ and $|\{c_{j1}', c_{j2}', c_{j3}'\}\cap Y|=1$, and if $F\subseteq Y$, then $|\{b_{j1}, b_{j2}, b_{j3}\}\cap X|=2$ and $|\{c_{j1}', c_{j2}', c_{j3}'\}\cap X|=1$. 
\end{itemize}
\end{lemma}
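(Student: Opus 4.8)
The plan is to prove the four claims in the stated order, relying on two structural facts about $G$. First, for $m\ge 2$ each of the cliques $F$ (on $2m$ vertices) and $T$ (on $3m$ vertices) has at least three vertices, hence is monochromatic under any matching cut $M=(X,Y)$. Second, every variable vertex $c_{jk}$ has exactly two neighbours outside its variable clique $Q(x)$, namely $c_j\in F$ and $a_{jk}\in T$. I will repeatedly invoke the forcing principle: if a vertex $u$ lies in one part of $M$ and already has a neighbour in the other part, then every remaining neighbour of $u$ lies in $u$'s part. This is effective because, apart from the clique edges inside $F$, inside $T$, and inside the cliques $Q(x)$, every edge of $G$ lies inside a single clause gadget, so the vertices $a_{jk},b_{jk},c_{jk}'$ and $c_j,c_j'$ have small, explicitly known neighbourhoods. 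By the symmetry $X\leftrightarrow Y$ it suffices, in (ii)--(iv), to treat the case $F\subseteq X$.

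For (i), assume $F\cup T$ is monochromatic, say $F\cup T\subseteq X$, and derive a contradiction with $Y\ne\emptyset$. Each $c_{jk}$ has the two $X$-neighbours $c_j$ and $a_{jk}$, so $c_{jk}\in X$. Next, every clause $6$-cycle $D_j$ lies in $X$: each vertex of $D_j$ has exactly one neighbour outside $D_j$, namely an $a_{jk}$ or $c_j'$, a vertex of $X$; so a vertex of $D_j$ lying in $Y$ would have its unique cut edge along that external edge, which forces both of its cycle-neighbours into $Y$, and propagating around the $6$-cycle puts all of $D_j$ --- in particular $c_{j1}',c_{j2}',c_{j3}'$ --- into $Y$, so $c_j'\in X$ would have three cut edges, impossible. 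Hence every vertex of $G$ lies in $X$, contradicting $Y\ne\emptyset$. This proves (i); in particular $T\subseteq Y$ whenever $F\subseteq X$.

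For (ii), $c_j\in X$ has at most one neighbour in $Y$ among $c_{j1},c_{j2},c_{j3}$, which is the upper bound. If all three lay in $X$, then for each $k$ the vertex $a_{jk}\in Y$ would have its unique cut edge at $c_{jk}$, forcing $b_{jk}\in Y$, and then each $b_{jk}\in Y$ has its unique cut edge among its two $c'$-neighbours. Since the sets of $c'$-vertices adjacent to $b_{j1},b_{j2},b_{j3}$ are precisely the three $2$-subsets of $\{c_{j1}',c_{j2}',c_{j3}'\}$, no two of these $c'$-vertices can both lie in $X$, so at least two of them lie in $Y$, contradicting that $c_j'\in X$ has at most one neighbour in $Y$. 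For (iii), each $c_{jk}\in Q(x)$ has the neighbour $c_j\in F\subseteq X$ and the neighbour $a_{jk}\in T\subseteq Y$ in different parts, so whichever of these lies opposite to $c_{jk}$ is $c_{jk}$'s unique cut edge, whence all other neighbours of $c_{jk}$ --- in particular all of $Q(x)$ --- lie in $c_{jk}$'s part, so $Q(x)$ is monochromatic. For (iv), by (ii) exactly one variable vertex, say $c_{j\ell}$, lies in $Y$; for $k\ne\ell$ the vertex $a_{jk}\in Y$ has its cut edge at $c_{jk}\in X$, forcing $b_{jk}\in Y$, so at least two $b$-vertices lie in $Y$. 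If $b_{j\ell}$ also lay in $Y$, the counting of the previous paragraph (all three $b$-vertices in $Y$, each already incident to $a_{jk}\in Y$) would again force at least two $c'$-vertices into $Y$, impossible; so exactly two $b$-vertices lie in $Y$ and $b_{j\ell}\in X$. Finally $b_{j\ell}\in X$ has its cut edge at $a_{j\ell}\in Y$, so both of its $c'$-neighbours lie in $X$, while the remaining $c'$-vertex of $D_j$ is adjacent to the two $Y$-side $b$-vertices and hence must lie in $Y$, otherwise it would have two cut edges; thus exactly one $c'$-vertex lies in $Y$, proving (iv).

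The step I expect to be the main obstacle is the careful bookkeeping inside a single clause gadget: one must keep exact track of the bipartite adjacency between $\{b_{j1},b_{j2},b_{j3}\}$ and $\{c_{j1}',c_{j2}',c_{j3}'\}$ --- each $c'$-vertex is adjacent to precisely two $b$-vertices, plus $c_j'$ --- so that the counting steps (``no two $c'$-vertices both lie in $X$'', ``the remaining $c'$-vertex is forced into $Y$'') are watertight, and to present the $6$-cycle propagation in (i) and the two symmetric cases $F\subseteq X$, $F\subseteq Y$ cleanly. A minor point, handled at the outset, is that the monochromaticity of $F$ requires $|F|=2m\ge 3$, i.e.\ $m\ge 2$.
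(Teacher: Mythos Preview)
Your proof is correct and follows essentially the same approach as the paper: both argue monochromaticity of the large cliques $F$ and $T$, then use the local structure of each clause gadget to force the counts in (ii)--(iv). Your treatment is actually more detailed than the paper's in two places: for (i) you give an explicit propagation around the $6$-cycle $D_j$ (the paper does a shorter case split on whether some $b_{jk}\in Y$), and for (iv) you spell out the full argument, whereas the paper simply writes ``This fact can be derived from (i) and (ii).'' Your observation that $|F|=2m\ge 3$ needs $m\ge 2$ is a valid technical point the paper glosses over.
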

\begin{proof}
Notice that $F$ and $T$ are cliques with at least three vertices, hence~$F$ and~$T$ are monochromatic. 

(i): Suppose not, and let $F$ and $T$ both be contained in $X$, say.  Then all variable vertices~$c_{jk}$, $1\le j\le m, 1\le k\le 3$, also belong to $X$ because each of them has two neighbors in $F\cup T\subseteq X$. 
Now, if all $b_{jk}$ are in~$X$, then all $c_{jk}'$ are also in~$X$ because in this case each of them has two neighbors in~$X$, and thus $X=V(G)$.      
Thus some $b_{jk}$ is in~$Y$, and so are its two neighbors in $\{c_{j1}',c_{j2}',c_{j3}'\}$. But then $c_j'$, which is in~$X$, has two neighbors in~$Y$. This contradiction shows that~$F$ and~$T$ must belong to different parts of~$M$, hence (i). 

(ii): By (i), let $F\subseteq X$ and $T\subseteq Y$, say. (The case $F\subseteq Y$ is symmetric.) Then, for any $j$, at most one of $c_{j1}, c_{j2}$ and~$c_{j3}$ can be outside~$X$. Assume that, for some~$j$, all $c_{j1}, c_{j2}, c_{j3}$ are in $X$. The assumption implies that all $b_{j1}, b_{j2}, b_{j3}$ belong to~$Y$, and then all $c_{j1}', c_{j2}', c_{j3}'$ belong to~$Y$, too. But then $c_j'$, which is in $X$, has three neighbors in $Y$. This contradiction shows (ii). 

(iii): Suppose that two variable vertices $c_{jk}$ and $c_{j'k'}$ in some clique $Q(x)$ are in different parts of $M$. Then, as $c_{jk}$ and $c_{j'k'}$ have neighbor $c_j$ and $c_{j'}$, respectively, in the monochromatic clique $F$, $c_{jk}$ has two neighbors in the part of~$c_{j'k'}$ or~$c_{j'k'}$ has two neighbors in the part of~$c_{jk}$. This contradiction shows (iii).

(iv): This fact can be derived from (i) and (ii). 
\qed 
\end{proof}

\begin{lemma}\label{lem:mc->pmc}
Every matching cut of $G$, if any, is a perfect matching cut.
\end{lemma}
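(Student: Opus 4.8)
The plan is to take an arbitrary matching cut $M=(X,Y)$ of $G$ and show it is a perfect matching, i.e.\ every vertex has exactly one neighbor in the opposite part. By Lemma~\ref{lem:mc-in-G}(i) we may assume $F\subseteq X$ and $T\subseteq Y$ (the other case is symmetric). The key structural facts we already have are: each variable clique $Q(x)$ is monochromatic (Lemma~\ref{lem:mc-in-G}(iii)); in each gadget $G(C_j)$ exactly one of $c_{j1},c_{j2},c_{j3}$ lies in $Y$ (Lemma~\ref{lem:mc-in-G}(ii)); and exactly two of $b_{j1},b_{j2},b_{j3}$ and exactly one of $c_{j1}',c_{j2}',c_{j3}'$ lie in $Y$ (Lemma~\ref{lem:mc-in-G}(iv)). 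The work is to check, gadget by gadget, that under these constraints the cut edges form a perfect matching on each gadget, and then observe that vertices of $F$ and $T$ are also perfectly matched.

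First I would fix $j$ and, using Lemma~\ref{lem:mc-in-G}(ii), say $c_{j3}\in Y$ and $c_{j1},c_{j2}\in X$ (the naming is symmetric among the three variable positions). Then I trace the forced memberships inside $G(C_j)$: since $a_{jk}\in T\subseteq Y$ for all $k$, and $c_{j1},c_{j2}\in X$ have their unique $Y$-neighbor among $\{a_{j1},a_{j2},a_{j3}\}$, while $c_{j3}\in Y$ forces its neighbor set to behave consistently. Working up the gadget: $b_{jk}$ adjacent to $a_{jk}\in Y$ and to vertices of the clause $6$-cycle; Lemma~\ref{lem:mc-in-G}(iv) tells us two $b_{jk}$ are in $Y$ and one in $X$, and one $c_{jk}'$ is in $Y$ and two are in $X$. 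I would determine exactly which ones, using that $c_j'\in F\subseteq X$ has exactly one neighbor in $Y$ among $\{c_{j1}',c_{j2}',c_{j3}'\}$, that $c_j\in F\subseteq X$ has exactly one neighbor in $Y$ among $\{c_{j1},c_{j2},c_{j3}\}$ (namely $c_{j3}$), and the edges of the $6$-cycle $D_j$ together with the extra edges $b_{j1}c_{j2}'$, $b_{j3}c_{j2}'$, $b_{j2}c_{j1}'$, $b_{j2}c_{j3}'$. In each of the (essentially unique up to symmetry) consistent assignments, I verify that every vertex of the gadget — the six $a/b$-type, the $c_{jk}$, the $c_{jk}'$, and the two clause vertices — has precisely one neighbor across the cut, i.e.\ the cut edges restricted to $V(G(C_j))$ plus the incident clique edges form a matching saturating $V(G(C_j))$.

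It remains to handle the cliques $F$ and $T$ themselves and the inter-gadget edges. Since $T\subseteq Y$ is monochromatic and its only neighbors outside $T$ are the $b_{jk}$ and $c_{jk}$, and since $T$ is a clique, each $a_{jk}\in T$ must have its (unique) cut-neighbor among $\{c_{jk},b_{jk}\}$; I check that the gadget analysis supplies exactly one such. Similarly each $c_j,c_j'\in F\subseteq X$ has its unique cut-neighbor inside its own gadget (some $c_{j3}\in Y$, resp.\ some $c_{jk}'\in Y$), already accounted for. Finally the inter-gadget edges join variable vertices of the same variable, hence lie inside a monochromatic $Q(x)$ and are not cut edges, so they do not interfere. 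Assembling these, every vertex of $G$ has exactly one neighbor in the opposite part, so $M$ is a perfect matching cut.

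The main obstacle is the bookkeeping in the middle step: pinning down, from the counting constraints in Lemma~\ref{lem:mc-in-G}(ii) and~(iv) together with the $6$-cycle and chord structure of $G(C_j)$, exactly which of the $b_{jk}$ and $c_{jk}'$ land in which part, and then confirming the perfect-matching condition at each of the roughly dozen gadget vertices. I expect this to reduce (after using the three-fold symmetry among variable positions) to checking one or two explicit configurations per gadget, which is routine but must be done carefully so that no vertex is left with zero or two cut-neighbors.
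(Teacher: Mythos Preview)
Your plan is correct and would go through, but the paper's proof is much shorter because of one observation you do not exploit: since $(X,Y)$ is already a matching cut, every vertex has \emph{at most} one neighbour across the cut, so to prove it is a perfect matching cut one only has to show that every vertex has \emph{at least} one. With this, there is no need to pin down the exact $X/Y$-colouring inside each $G(C_j)$. The paper simply runs through the vertex types: $c_j,c_j'\in F\subseteq X$ have a $Y$-neighbour by Lemma~\ref{lem:mc-in-G}(ii) and~(iv) respectively; any $c_{jk}$ or $b_{jk}$ lying in $X$ is adjacent to $a_{jk}\in T\subseteq Y$; any $c_{jk}'\in X$ is adjacent to two of the $b_{j\ell}$, and since by~(iv) two of the three $b_{j\ell}$ are in $Y$, at least one of those neighbours is in~$Y$. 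A symmetric and equally brief check covers the vertices in~$Y$. So the step you flag as the ``main obstacle'' --- tracing the forced memberships through the $6$-cycle to determine exactly which $b_{jk}$ and $c_{jk}'$ land in which part, and then verifying a dozen vertices one by one --- is entirely avoidable. Your route does have the incidental benefit of showing that the colouring of each gadget is uniquely determined once one fixes which $c_{jk}$ is in~$Y$, but for the lemma as stated this is not needed.
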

\begin{proof}
Let $M=(X,Y)$ be a matching cut of $G$. By Lemma~\ref{lem:mc-in-G}~(i), let $F\subseteq X$ and $T\subseteq Y$, say. We argue that every vertex in~$X$ has a neighbor (hence exactly one) in~$Y$. Indeed, for each~$j$, 
\begin{itemize}
\item $c_j\in F\subseteq X$ has a neighbor $c_{jk}\in Y$ (by Lemma~\ref{lem:mc-in-G}~(ii)),
\item $c_j'\in F\subseteq X$ has a neighbor $c_{jk}'\in Y$ (by Lemma~\ref{lem:mc-in-G}~(iv)),
\item each $c_{jk}\in X$ has a neighbor $a_{jk}\in T\subseteq Y$ (by construction of~$G$),
\item each $b_{jk}\in X$ has a neighbor $a_{jk}\in T\subseteq Y$ (by construction of~$G$),
\item each $c_{jk}'\in X$ has a neighbor in $\{b_{j1},b_{j2},b_{j3}\}\cap Y$ (by Lemma~\ref{lem:mc-in-G}~(iv)).
\end{itemize}
Similarly, it can be seen that every vertex in $Y$ has a neighbor in $X$.\qed
\end{proof}

\begin{lemma}\label{lem:1in3->pmc}
If $\phi$ has a 1-in-3 assignment, then $G$ has a perfect matching cut.
\end{lemma}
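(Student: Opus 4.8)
The plan is to read off an explicit bipartition $(X,Y)$ of $V(G)$ from a given 1-in-3 assignment of $\phi$ and to check that it is a matching cut; since Lemma~\ref{lem:mc->pmc} guarantees that every matching cut of $G$ is already a perfect matching cut, exhibiting a matching cut is enough. The construction is guided by the necessary conditions collected in Lemma~\ref{lem:mc-in-G}.

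\emph{The bipartition.} Fix a 1-in-3 assignment and, for each clause $C_j$, let $t(j)\in\{1,2,3\}$ be the position of its unique true literal, so $c_{j,t(j)}$ is the ``true'' variable vertex of $C_j$. From the gadget $G(C_j)$ in Fig.~\ref{fig:P14freeGadget} one checks that inside the clause $6$-cycle $D_j$ the vertices $b_{j1},b_{j2},b_{j3}$ are non-adjacent to exactly $c_{j3}',c_{j2}',c_{j1}'$, respectively; accordingly let $\ell(j)$ be the index for which $c_{j,\ell(j)}'$ is the $c'$-vertex not adjacent to $b_{j,t(j)}$ (so $\ell(j)=4-t(j)$). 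Now put $F\subseteq X$ and $T\subseteq Y$; for every variable $x$ put the whole variable clique $Q(x)$ into $Y$ if $x$ is true and into $X$ if $x$ is false; and for every $j$ put $b_{j,t(j)}\in X$, $b_{jk}\in Y$ for $k\ne t(j)$, $c_{j,\ell(j)}'\in Y$ and $c_{jk}'\in X$ for $k\ne\ell(j)$. Because all occurrences of variables are positive, a true variable is the unique true literal of every clause containing it, so this is consistent ($c_{j,t(j)}\in Y$ while $c_{jk}\in X$ for $k\ne t(j)$) and each variable clique is monochromatic, in accordance with Lemma~\ref{lem:mc-in-G}.

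\emph{Verification.} Clearly $X,Y\neq\emptyset$, so it remains to show that no vertex has two neighbors on the opposite side. Every vertex of $G$ has all its neighbors inside its own gadget $G(C_j)$ together with the rest of whichever of the cliques $F$, $T$, $Q(x)$ it belongs to, and all of these cliques are monochromatic under our partition; hence only within-gadget edges can be cut. The routine cases are $c_j\in F$ (the only possible cut edge is to $c_{j,t(j)}$), $c_j'\in F$ (only to $c_{j,\ell(j)}'$), $a_{jk}\in T$ (only to $c_{jk}$ when $k=t(j)$ and to $b_{jk}$ otherwise), and the variable vertex $c_{jk}$ (only to $c_j$ when $k=t(j)$, to $a_{jk}$ otherwise). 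The one point needing care is the $6$-cycle $D_j$: using the adjacencies $b_{j1}\sim c_{j1}',c_{j2}'$, $b_{j2}\sim c_{j1}',c_{j3}'$, $b_{j3}\sim c_{j2}',c_{j3}'$ together with the choice of $\ell(j)$, one verifies that $b_{j,t(j)}$ and $c_{j,\ell(j)}'$ have only monochromatic neighbors within $D_j$ (so their only cut edges leave $D_j$, going to $a_{j,t(j)}$ and to $c_j'$ respectively), whereas each of the remaining two $b_{jk}$ and two $c_{jk}'$ has exactly one cut edge inside $D_j$. This proves $(X,Y)$ is a matching cut, and Lemma~\ref{lem:mc->pmc} then upgrades it to a perfect matching cut, which is the assertion of the lemma.

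The main obstacle is precisely this $D_j$-analysis: since the clique edges of $G$ destroy the branch-permuting symmetry of a single gadget, $\ell(j)$ must be taken as a genuine function of $t(j)$, and one has to follow how the $b$-vertices and $c'$-vertices of $D_j$ are joined for all three possibilities of $t(j)$. Everything else reduces to the bookkeeping of checking that $F$, $T$ and the cliques $Q(x)$ stay monochromatic, which is exactly the 1-in-3 condition.
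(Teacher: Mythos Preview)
Your construction of $(X,Y)$ is exactly the paper's partition, and your route via Lemma~\ref{lem:mc->pmc} (show it is a matching cut, then upgrade) is a perfectly good substitute for the paper's bare ``it is not difficult to verify''. One small slip in the case check: for $a_{jk}\in T\subseteq Y$, the unique cross-edge is to $b_{jk}\in X$ when $k=t(j)$ and to $c_{jk}\in X$ when $k\ne t(j)$, i.e.\ the opposite of what you wrote; with that swap the verification goes through as you describe.
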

\begin{proof}
Partition $V(G)$ into disjoint subsets $X$ and $Y$ as follows. (Fig.~\ref{fig:1in3->pmc} shows the partition for the example graph in Fig.~\ref{fig:P14freeExample1} given the assignment $\allowbreak y=v= \text{True}$, $x=z = u=w= \text{False}$.)  
First, 
\begin{itemize}
\item put $F$ into $X$, and for all variables $x$ which are assigned with False, put $Q(x)$ into $X$;
\item for each $1\le j\le m$, let $c_{jk}$ with $k=k(j)\in\{1,2,3\}$ be the variable vertex, for which the variable $x$ of $c_{jk}$ is assigned with True. Then put $b_{jk}$ and its two neighbors in $\{c_{j1}', c_{j2}', c_{j3}'\}$ into $X$.  
\end{itemize}
Let $Y=V(G)\setminus X$. Then, it is not difficult to verify that $M=(X,Y)$ is a perfect matching cut of $G$.\qed
\end{proof}

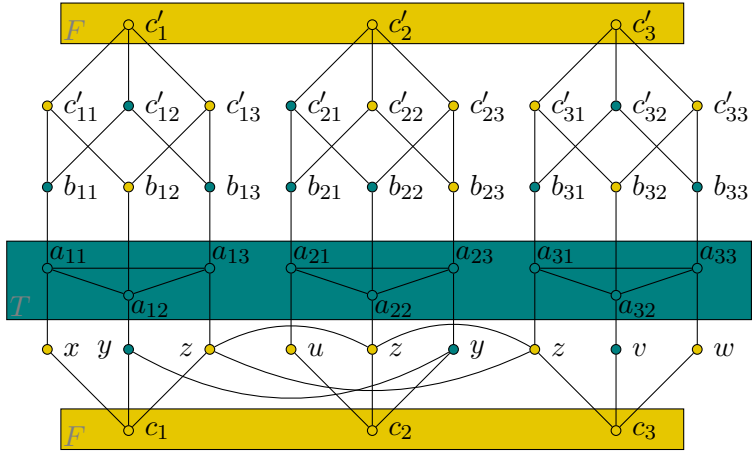
\begin{figure}[ht]
\begin{center}
\tikzstyle{vertexS}=[draw,circle,inner sep=2pt,fill=black]
\tikzstyle{vertex}=[draw,circle,inner sep=1.3pt,fill=black] 
\tikzstyle{vertexR}=[draw=black,circle,inner sep=1.3pt,fill=flax] 
\tikzstyle{vertexB}=[draw=black,circle,inner sep=1.3pt,fill=teal] 

\begin{tikzpicture}[scale=.36] 
\filldraw[fill=teal, opacity=0.10, draw opacity=0.15] (-0.5,5.1) rectangle (27,8);
\node at (0,5.6) {\small \textcolor{gray}{$T$}}; 
\filldraw[fill=flax, opacity=0.20, draw opacity=0.15] (1.5,15.3) rectangle (24.5,16.8);
\node at (2,15.8) {\small \textcolor{gray}{$F$}}; 
\filldraw[fill=flax, opacity=0.20, draw opacity=0.15] (1.5,0.3) rectangle (24.5,1.8);
\node at (2,0.8) {\small \textcolor{gray}{$F$}}; 

\node[vertexR] (c1) at (4,1)  [label=right:{\small $c_{1}$}]{};
\node[vertexR] (c11) at (1,4)  [label=right:{\small $x$}] {};
\node[vertexB] (c12) at (4,4)  [label=left:{\small $y$}]{};  
\node[vertexR] (c13) at (7,4)  [label=left:{\small $z$}] {};
\node[vertexB] (a11) at (1,7)  {};
\node at (1.95,7.5) {\small $a_{11}$};
\node[vertexB] (a12) at (4,6)  {};
\node at (4.95,5.5) {\small $a_{12}$};
\node[vertexB] (a13) at (7,7)  {}; 
\node at (7.95,7.5) {\small $a_{13}$};
\node[vertexB] (b11)  at (1,10) [label=right:{\small $b_{11}$}] {};
\node[vertexR] (b12) at (4,10)  [label=right:{\small $b_{12}$}] {};
\node[vertexB] (b13) at (7,10)  [label=right:{\small $b_{13}$}] {};
\node[vertexR] (c11')  at (1,13) [label=right:{\small $c_{11}'$}] {};
\node[vertexB] (c12') at (4,13)  [label=right:{\small $c_{12}'$}] {};
\node[vertexR] (c13') at (7,13)  [label=right:{\small $c_{13}'$}] {};
\node[vertexR] (c1')  at (4,16) [label=right:{\small $c_{1}'$}]{};

\draw (c1)--(c11); \draw (c1)--(c12); \draw (c1)--(c13); 
\draw (c11)--(a11)--(b11)--(c11')--(c1');
\draw (c12)--(a12)--(b12)--(c11'); \draw (b12)--(c13'); \draw (c1')--(c12');
\draw (c13)--(a13)--(b13)--(c13')--(c1');
\draw (b11)--(c12')--(b13);
\draw (a11)--(a12)--(a13)--(a11);

\node[vertexR] (c2) at (13,1)  [label=right:{\small $c_{2}$}] {};
\node[vertexR] (c21) at (10,4)   [label=right:{\small $u$}] {};
\node[vertexR] (c22) at (13,4)  [label=right:{\small $z$}] {}; 
\node[vertexB] (c23) at (16,4)  [label=right:{\small $y$}] {};
\node[vertexB] (a21) at (10,7)  {};
\node at (10.95,7.5) {\small $a_{21}$};
\node[vertexB] (a22) at (13,6)  {};
\node at (13.95,5.5) {\small $a_{22}$};
\node[vertexB] (a23) at (16,7)  {};
\node at (16.95,7.5) {\small $a_{23}$};
\node[vertexB] (b21)  at (10,10) [label=right:{\small $b_{21}$}]{};
\node[vertexB] (b22) at (13,10)  [label=right:{\small $b_{22}$}]{};
\node[vertexR] (b23) at (16,10)  [label=right:{\small $b_{23}$}]{};
\node[vertexB] (c21')  at (10,13)  [label=right:{\small $c_{21}'$}] {}; 
\node[vertexR] (c22') at (13,13)   [label=right:{\small $c_{22}'$}] {};
\node[vertexR] (c23') at (16,13)   [label=right:{\small $c_{23}'$}] {};
\node[vertexR] (c2')  at (13,16) [label=right:{\small $c_{2}'$}]{};

\draw (c2)--(c21); \draw (c2)--(c22); \draw (c2)--(c23); 
\draw (c21)--(a21)--(b21)--(c21')--(c2');
\draw (c22)--(a22)--(b22)--(c21'); \draw (b22)--(c23'); \draw (c2')--(c22');
\draw (c23)--(a23)--(b23)--(c23')--(c2');
\draw (b21)--(c22')--(b23);
\draw (a21)--(a22)--(a23)--(a21);

\node[vertexR] (c3) at (22,1)   [label=right:{\small $c_{3}$}] {};
\node[vertexR] (c31) at (19,4)   [label=right:{\small $z$}] {};
\node[vertexB] (c32) at (22,4) [label=right:{\small $v$}] {};  
\node[vertexR] (c33) at (25,4)    [label=right:{\small $w$}] {};
\node[vertexB] (a31) at (19,7)  {};
\node at (19.95,7.5) {\small $a_{31}$};
\node[vertexB] (a32) at (22,6)  {};
\node at (22.95,5.5) {\small $a_{32}$};
\node[vertexB] (a33) at (25,7)  {};
\node at (25.95,7.5) {\small $a_{33}$};
\node[vertexB] (b31)  at (19,10) [label=right:{\small $b_{31}$}]{};
\node[vertexR] (b32) at (22,10)  [label=right:{\small $b_{32}$}]{};
\node[vertexB] (b33) at (25,10)  [label=right:{\small $b_{33}$}]{};
\node[vertexR] (c31')  at (19,13) [label=right:{\small $c_{31}'$}]{};
\node[vertexB] (c32') at (22,13)  [label=right:{\small $c_{32}'$}]{};
\node[vertexR] (c33') at (25,13)  [label=right:{\small $c_{33}'$}]{};
\node[vertexR] (c3')  at (22,16) [label=right:{\small $c_{3}'$}]{};

\draw (c3)--(c31); \draw (c3)--(c32); \draw (c3)--(c33); 
\draw (c31)--(a31)--(b31)--(c31')--(c3');
\draw (c32)--(a32)--(b32)--(c31'); \draw (b32)--(c33'); \draw (c3')--(c32');
\draw (c33)--(a33)--(b33)--(c33')--(c3');
\draw (b31)--(c32')--(b33);
\draw (a31)--(a32)--(a33)--(a31);

\draw (c12)[bend angle=30, bend right]to (c23); 
\draw (c13)[bend angle=28, bend left]to (c22); \draw (c22)[bend angle=30, bend left]to (c31);
\draw (c13)[bend angle=25, bend right]to (c31);

\end{tikzpicture} 
\caption{The perfect matching cut $(X,Y)$ of the example graph $G$ in Fig.~\ref{fig:P14freeExample1} given the assignment $\allowbreak y=v= \text{True}$, $x=z = u=w= \text{False}$.  
$X$ and $Y$ consist of the flax and teal vertices, respectively.}\label{fig:1in3->pmc}
\end{center}
\end{figure}

We now are ready to prove Theorem~\ref{thm:promise}: First note that by Lemmas~\ref{lem:P14free} and~\ref{lem:mc->pmc}, $G$ is  $\{3P_6,2P_7,P_{14}\}$-free $8$-chordal and every matching cut of~$G$ (if any) is a perfect matching cut. In particular, every matching cut of~$G$ is extendable to a perfect matching. 

Now, suppose $\phi$ has a 1-in-3 assignment. Then, by Lemma~\ref{lem:1in3->pmc}, $G$ has a perfect matching cut. In particular, $G$ has a disconnected perfect matching and, actually, a matching cut.

Conversely, let $G$ have a matching cut $M=(X,Y)$, possibly a perfect matching cut or one that is contained in a perfect matching of~$G$. Then, by Lemma~\ref{lem:mc-in-G}~(i), we may assume that $F\subseteq X$, and set variable $x$ to True if the corresponding variable clique~$Q(x)$ is contained in $Y$ and False if $Q(x)$ is contained in $X$. By Lemma~\ref{lem:mc-in-G}~(iii), this assignment is well defined. Moreover, it is a 1-in-3 assignment for $\phi$: consider a clause $C_j=\{x,y,z\}$ with $c_{j1}=x, c_{j2}=y$ and $c_{j3}=z$. By Lemma~\ref{lem:mc-in-G}~(ii) and~(iii), exactly one of $Q(x), Q(y)$ and~$Q(z)$ is contained in~$Y$, hence exactly one of $x,y$ and $z$ is assigned True.  

Finally, note that $G$ has $N=14m$ vertices and recall that, assuming ETH, \P1IN3SAT\ cannot be solved in $2^{o(m)}$ time. Thus, the ETH implies that no algorithm with runtime $2^{o(N)}$ exists for \textsc{promise-pmc mc}, \textsc{promise-pmc pmc} and \textsc{promise-pmc dpm}, even when restricted to $N$-vertex $\{3P_6,2P_7,P_{14}\}$-free $8$-chordal graphs. 

The proof of Theorem~\ref{thm:promise} is complete.\qed

\paragraph{Remark}  
It can easily be verified that the constructed graph $G$ has diameter~$4$ and radius~$3$. There exist a number of results on algorithms and complexity of matching cut problems restricted to graphs of bounded diameter and radius; the paper~\cite{LuckePR24} is an excellent source for this topic. 
While complexity dichotomies for \MC\ on graphs of bounded diameter and radius (\cite{LeL19,LuckePR23}), for \PMC\ on graphs of bounded radius (\cite{HeggernesT98,LuckePR23}), and for \DPM\ on graphs of bounded diameter (\cite{BouquetP25}) are known, it remain only two open cases in this research direction (see~\cite{LuckePR24}): what is the complexity of \PMC\ on graphs of diameter~3, and what is the complexity of \DPM\ on graphs of radius~2?

\section{Polynomial results: Proof of Theorem~\ref{thm:longholefree-dpm}}\label{sec:longholefree-dpm}
In this section, we prove Theorem~\ref{thm:longholefree-dpm}: 
\emph{\DPM\ and \PMC\ can be solved in polynomial time when restricted to $4$-chordal graphs.}

Recall that by Theorem~\ref{thm:longholefree-mc}, 
\MC\ is polynomially solvable for $4$-chordal graphs (also called \emph{quadrangulated graphs}). 
We first point out that \DPM\ is polynomially solvable for $4$-chordal graphs, too, by following known approaches~\cite{KratschL16,LeL19,Moshi89}. We note that the approach in~\cite{KratschL16,LeL19} has some similarity to Moshi's algorithm~\cite{Moshi89} for \MC\ on $4$-chordal graphs. However, it is not immediately clear how to modify the latter for solving \DPM\ restricted to $4$-chordal graphs.  We note also that both approaches for \MC\ and \DPM\ do not work for \PMC. 
Then, we will show that \PMC\ can be solved in polynomial time on $4$-chordal graphs by a new approach involving a non-trivial reduction to \textsc{2sat}.

\subsection{\DPM\ in $4$-chordal graphs}
Given a connected graph $G=(V,E)$ and two disjoint, non-empty vertex sets $A, B\subset V$ such that each vertex in $A$ is adjacent to exactly one vertex of $B$ and each vertex in~$B$ is adjacent to exactly one vertex of $A$. We say a matching cut of~$G$ is an \emph{$A,B$-matching cut} (or a matching cut \emph{separating $A$, $B$}) if $A$ is contained in one side and~$B$ is contained in the other side of the matching cut. Observe that $G$ has a matching cut if and only if~$G$ has an $\{a\},\{b\}$-matching cut for some edge $ab$, and $G$ has a disconnected perfect matching if and only if~$G$ has a perfect matching containing an $\{a\},\{b\}$-matching cut for some edge $ab$.

For each edge $ab$ of a $4$-chordal graph~$G$, we will be able to decide if $G$ has a disconnected perfect matching containing a matching cut separating $A=\{a\}$ and $B=\{b\}$. 
This is done by applying known forcing rules (\cite{KratschL16,LeL19,LeT22}), which are given below. 
Initially, set $X:=A$, $Y:=B$ and write $F=V(G)\setminus (X\cup Y)$ for the set of \lq free\rq\ vertices. 
The sets $A,B,X$ and~$Y$ will be extended, if possible, by adding vertices from~$F$ so that $A\subseteq X$, $B\subseteq Y$ and if $(U,W)$ is any $A$,$B$-matching cut of $G$ then either $X\subseteq U$ and $Y\subseteq Y$, or else $X\subseteq W$ and $Y\subseteq X$. 
In doing so, we will apply the following forcing rules exhaustively. The first three rules will detect certain vertices that ensure that $G$ cannot have an $A,B$-matching cut.  
\begin{itemize}
\item[(R1)] Let $v\in F$ be adjacent to a vertex in $A$. If $v$ is
   \begin{itemize}
      \item adjacent to a vertex in $B$, or 
      \item adjacent to (at least) two vertices in $Y\setminus B$,
   \end{itemize}
   then $G$ has no $A,B$-matching cut.
   
\item[(R2)] Let $v\in F$ be adjacent to a vertex in $B$. If $v$ is
   \begin{itemize}
      \item adjacent to a vertex in $A$, or 
      \item adjacent to (at least) two vertices in $X\setminus A$,
   \end{itemize}
   then $G$ has no $A,B$-matching cut.

\item[(R3)] If $v\in F$ is adjacent to (at least) two vertices in $X\setminus A$ and to (at least) two vertices in $Y\setminus B$, then $G$ has no $A,B$-matching cut.
\end{itemize}

The correctness of (R1), (R2) and (R3) is quite obvious. We assume that, before each application of the rules (R4) and (R5) below, none of (R1), (R2) and (R3) is applicable.

\begin{itemize}
\item[(R4)] Let $v\in F$ be adjacent to a vertex in $A$ or to (at least) two vertices in $X\setminus A$. Then $X:=X\cup\{v\}$, $F:=F\setminus\{v\}$. If, moreover, $v$ has a unique neighbor $w\in Y\setminus B$ then $A:=A\cup\{v\}$, $B:=B\cup\{w\}$.

\item[(R5)] Let $v\in F$ be adjacent to a vertex in $B$ or to (at least) two vertices in $Y\setminus B$. Then $Y:=Y\cup\{v\}$, $F:=F\setminus\{v\}$. If, moreover, $v$ has a unique neighbor $w\in X\setminus A$ then $B:=B\cup\{v\}$, $A:=A\cup\{w\}$. 
\end{itemize}

\smallskip
We refer to~\cite{LeL19} for the correctness of rules (R4) and (R5), and for the following facts. 

\begin{fact}\label{fact:a}
The total runtime for applying (R1) -- (R5) until none of the rules is applicable is bounded by $O(nm)$. 
\end{fact}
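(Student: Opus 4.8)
The plan is to bound, separately, how many times each rule can fire and how much work a single firing (together with deciding \emph{which} rule, if any, applies) costs. First note that (R1), (R2) and (R3) are \emph{terminal}: as soon as one of them is applicable the procedure halts and reports that $G$ has no $A,B$-matching cut, so these three rules together account for at most one application. Rules (R4) and (R5), by contrast, each move (at least) the vertex $v$ out of $F$ and into $X\cup Y$, and no rule ever moves a vertex back into $F$; likewise the sets $A\subseteq X$ and $B\subseteq Y$ only grow. Hence $|F|$ strictly decreases with every application of (R4) or (R5), which shows both that the procedure terminates and that (R4) and (R5) are applied at most $|F|\le n$ times in total.

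Next I would bound the work done between two consecutive applications. After the current sets $X,Y,A,B$ have been updated, one makes a single pass over all vertices of $F$ together with their incident edges, computing for each $v\in F$ the number of its neighbors in $A$, in $B$, in $X\setminus A$, and in $Y\setminus B$; from these four counts the applicability of each of (R1)--(R5) at $v$ is decided in constant time. Such a pass costs $O(n+m)=O(m)$ (the graph is connected, so $m\ge n-1$), and it simultaneously detects a terminal rule or exhibits a vertex to which (R4) or (R5) applies, after which the corresponding update is performed in $O(\deg v)=O(m)$ time. Since there are at most $n$ productive passes (one per vertex leaving $F$), plus one final pass that certifies no rule is applicable, the total running time is $O(nm)$, as claimed; this is the bound taken from~\cite{LeL19}.

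The only subtlety to watch is that a vertex just moved out of $F$ by (R4) (or into $A$, $B$) may, through the freshly enlarged sets, make a \emph{terminal} rule (R1)--(R3) applicable at some other free vertex, or enable a further application of (R4)/(R5). This is harmless, because the very next $O(m)$ pass recomputes all the relevant counts from scratch and is already charged in the accounting above. I do not expect any genuine obstacle here: the argument is a routine potential/counting argument with $|F|$ as the decreasing potential, and the coarse $O(nm)$ estimate deliberately absorbs all re-scanning. (If a sharper bound were wanted one would maintain the four neighbor-counts incrementally, updating only the free neighbors of a vertex when it changes status and keeping a worklist of affected vertices, which yields near-linear total time; but that refinement is not needed for Fact~\ref{fact:a}.)
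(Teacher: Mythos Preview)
Your argument is correct and is the standard potential argument one would expect: $|F|$ strictly decreases with each application of (R4)/(R5), bounding the number of rounds by $n$, and each round costs $O(m)$ for a fresh scan. The paper itself does not prove Fact~\ref{fact:a} but simply defers to~\cite{LeL19}, so there is no ``paper's own proof'' to compare against; your write-up supplies exactly the kind of justification that the cited reference contains.
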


\begin{fact}\label{fact:b}
Suppose none of (R1) -- (R5) is applicable. Then 
\begin{itemize}
  \item $(X,Y)$ is an $A,B$-matching cut of $G[X\cup Y]$, and any $A,B$-matching cut of~$G$ must contain $X$ in one side and $Y$ in the other side;

  \item for any vertex $v\in F$, \[N(v)\cap A=\emptyset,\, N(v)\cap B=\emptyset \text{ and } |N(v)\cap X|\le 1,\, |N(v)\cap Y|\le 1.\]
\end{itemize}
\end{fact}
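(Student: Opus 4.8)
The plan is to prove both bullets by tracking an invariant along the exhaustive application of (R1)--(R5); since (R1)--(R3) never modify $A,B,X,Y,F$, only the two ``growing'' rules (R4) and (R5) have to be analysed. The second bullet needs no invariant and reads off the terminal state directly: if some $v\in F$ had a neighbor in $A$ then (R4) would be applicable, and a neighbor of $v$ in $B$ would make (R5) applicable, so $N(v)\cap A=N(v)\cap B=\emptyset$; hence $N(v)\cap X=N(v)\cap(X\setminus A)$ and $N(v)\cap Y=N(v)\cap(Y\setminus B)$, and two neighbors of $v$ in $X\setminus A$ (resp.\ in $Y\setminus B$) would again trigger (R4) (resp.\ (R5)), giving $|N(v)\cap X|\le 1$ and $|N(v)\cap Y|\le 1$.

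For the first bullet I would show that the following invariant is maintained by every application of (R4) and (R5):
\begin{itemize}
\item[(I1)] $A\subseteq X$, $B\subseteq Y$, and $X\cap Y=\emptyset$;
\item[(I2)] $(X,Y)$ is a matching cut of $G[X\cup Y]$;
\item[(I3)] every vertex of $A$ has exactly one neighbor in $Y$ and it lies in $B$, and symmetrically every vertex of $B$ has exactly one neighbor in $X$ and it lies in $A$;
\item[(I4)] every matching cut $(U,W)$ of $G$ with $A\subseteq U$ and $B\subseteq W$ satisfies $X\subseteq U$ and $Y\subseteq W$.
\end{itemize}
The first bullet of the Fact is then exactly (I2) and (I4) in the terminal state (where no rule is applicable). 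In the initial state $X=A$, $Y=B$ the invariant is immediate, the hypothesis on $A,B$ saying precisely that $E(A,B)$ is a perfect matching of $G[A\cup B]$ with no further edge inside $A\cup B$.

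For the inductive step I would treat an application of (R4) to a vertex $v\in F$ (the (R5) case being symmetric), using the stated convention that none of (R1), (R2), (R3) is applicable at that moment. The engine of the argument is that $v$ then has \emph{no} neighbor in $B$ and \emph{at most one} neighbor in $Y$: a neighbor of $v$ in $B$ together with the trigger of (R4) (a neighbor in $A$, or two neighbors in $X\setminus A$) would make (R1) or (R2) applicable, and two neighbors of $v$ in $Y\setminus B$ together with that trigger would make (R1) or (R3) applicable. Granting this, (I1) is trivial, and (I4) survives: any matching cut $(U,W)$ of $G$ with $A'\subseteq U$, $B'\subseteq W$ also has $A\subseteq U$, $B\subseteq W$, hence by the inductive (I4) contains $X\subseteq U$ and $Y\subseteq W$, and then $v\in U$, because $v\in W$ would either give $v$ two neighbors in $X\setminus A\subseteq U$ or make $v$ a neighbor in $W$ of some $a\in A\subseteq U$ whose unique neighbor in $W$ is, by (I3), its $B$-partner (distinct from $v$ since $v\in F$).

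The delicate point---and the only place I expect real difficulty---is re-establishing (I2) and (I3) once $v$ is placed in $X$. The single way (I2) can break is that some $y\in Y$ already adjacent to a vertex of $X$ is also adjacent to $v$; by the engine observation such a $y$ is the unique neighbor of $v$ in $Y$ and lies in $Y\setminus B$, so the extension clause of (R4) fires with $w=y$, $A':=A\cup\{v\}$, $B':=B\cup\{w\}$. It thus suffices to show that $w$ has no neighbor in $X$ at the moment (R4) is applied, and this I would prove by looking back at the step that first put $w$ into $Y$: since $w\in Y\setminus B$ and $B$ only grows, that step was an application of (R5) whose extension clause did \emph{not} fire, so $w$ did not then have a unique neighbor in $X\setminus A$; the engine observation for that (R5) step (again invoking non-applicability of (R1)--(R3) there) then forces $w$ to have had no neighbor in $X$ whatsoever at that time; and $w$ cannot acquire an $X$-neighbor later, for a vertex $x'$ subsequently inserted into $X$ by (R4) with $x'\sim w$ would, by the engine observation applied to $x'$, have $w$ as its unique neighbor in $Y\setminus B$, firing (R4)'s extension clause and placing $w$ into $B$ --- contradicting $w\in Y\setminus B$. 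With ``$w$ has no neighbor in $X$'' established, (I2) is restored, and the same bookkeeping restores (I3) (the fresh vertices $v$ and $w$ each get exactly one cross-neighbor, namely each other, and no old vertex of $A\cup B$ is affected, since $v$ has no neighbor in $B$ and, by (I3), no old vertex of $A$ is adjacent to $w$). Hence the invariant is preserved, and in the terminal state (I2) and (I4) give the first bullet. Everything except this look-back --- the assertion that the ``committed'' matching-cut edges never come into conflict --- is routine case-checking against the three detecting rules.
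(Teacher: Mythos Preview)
The paper does not prove this Fact at all: it simply states ``We refer to~\cite{LeL19} for the correctness of rules (R4) and (R5), and for the following facts,'' so there is no in-paper argument to compare against. Your proposal therefore supplies what the paper omits, and it is sound.

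A few remarks. Your derivation of the second bullet from non-applicability of (R4)/(R5) is exactly right and needs no invariant. For the first bullet, the invariant (I1)--(I4) is the natural one, and your analysis of the ``engine'' (that the non-applicability of (R1)--(R3) forces the vertex being moved to have no neighbour in the opposite anchor set and at most one neighbour on the opposite side) is correct in both triggers of (R4). Your look-back argument for $w$---tracing it to the (R5) step that inserted it, observing the extension clause did not fire there (since $w\notin B$ and $B$ only grows), and then arguing no later (R4) step could have produced an $X$-neighbour of $w$ without simultaneously pushing $w$ into $B$---is the genuine content, and it goes through exactly as you describe; the only thing to note is that it relies on the paper's explicit convention that (R1)--(R3) are checked before every application of (R4)/(R5), which you do use. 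One harmless over-claim: the hypothesis on the initial $A,B$ does \emph{not} assert ``no further edge inside $A\cup B$,'' but you never actually use that; (I2) and (I3) hold initially from the matching condition alone.
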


We now are ready to prove Theorem~\ref{thm:longholefree-dpm}. 
Let $G$ be a connected, $4$-chordal graph, and let $ab$ be an edge of $G$. Set $A=\{a\}$ and $B=\{b\}$, and assume that none of (R1) -- (R5) is applicable. 
For a connected component $S$ of $G[F]$, let $N(S)$ be the set of vertices outside $S$ adjacent to some vertex in $S$. 
Then, for any connected component $S$ of $G[F]$, 
\begin{align*}
  &\text{$N(S)\cap X=\emptyset$\, or\, $N(S)\cap Y=\emptyset$.}
\end{align*}
For, otherwise choose two vertices $s,s'\in S$ with a neighbor $x\in N(s)\cap X$ and a neighbor $y\in N(s')\cap Y$ such that the distance between $s$ and $s'$ in $S$ is as small as possible, 
as well as the distance between $x$ and $y$ in $G[X\cup Y]$ is as small as possible. Note that, by the choice of $s$ and $s'$, if $s$ and $s'$ are both adjacent to $x$ or $y$ then $s=s'$. Note also that, by the definition of~$X$ and~$Y$, $G[X\cup Y]$ is connected, and the distance between $x$ and $y$ in $G[X\cup Y]$ is at least three. Then $s, s', x$ and $y$ and a shortest $s,s'$-path in $S$, a shortest $x,y$-path in $G[X\cup Y]$ together would induce a long hole in~$G$. 

Partition $F$ into disjoint subsets $F_X$ and $F_Y$ as follows:
\begin{align*}
F_X& = \bigcup \{S : \text{$S$ is a component of $G[F]$ with $N(S)\cap X\not=\emptyset$}\},\\
F_Y& = \bigcup \{T : \text{$T$ is a component of $G[F]$ with $N(T)\cap Y\not=\emptyset$}\}.
\end{align*}
Then, by the facts above and recall that $G$ is connected, 
\begin{align*}
& F=F_X\cup F_Y \text{ and } F_X\cap F_Y=\emptyset.
\end{align*}
Thus, 
\begin{align*}
&\text{$(X\cup F_X, Y\cup F_Y)$ is an $A,B$- matching cut of $G$}, 
\end{align*}
and it follows, that 
\begin{align*}
&\text{$G$ has a disconnected perfect matching containing an $A,B$-matching}\\ 
&\text{cut if and only if $G-A-B$ has a perfect matching.}  
\end{align*}
Therefore, with Fact~\ref{fact:a}, in $O(nm)$ time we can decide whether~$G$ has a matching cut containing a given edge.\footnote{Thus, \MC\ is solvable in $O(nm^2)$ time when restricted to $4$-chordal graphs, an alternative proof of Theorem~\ref{thm:longholefree-mc}. We remark that Moshi's algorithm solving \MC\ in $4$-chordal graphs runs in $O(n^3m)$ time which is worse than ours for sparse graphs.} 
Moreover, as a maximum matching can be computed in $O(\sqrt{n}m)$ time~\cite{MicaliV80,Vazirani24}, we can decide in  $O(n\sqrt{n}m^2)$ time whether~$G$ has a disconnected perfect matching containing an $\{a\},\{b\}$-matching cut for a given edge $ab$.  Since there are at most $m$ edges to check, we can solve \DPM\ in $O(n\sqrt{n}m^3)$ time on $4$-chordal graphs. 

We note that the above approach cannot be adapted to solve \PMC\ because the last fact is not longer true for perfect matching cuts. For instance, if $G-A-B$ has a \emph{perfect matching cut} then $G$ does not need to have a \emph{perfect matching cut} containing an $A,B$-matching cut. 
In the next subsection, we will reduce \PMC\ restricted to $4$-chordal graphs to \textsc{2sat} instead.  

\subsection{\PMC\ in $4$-chordal graphs}
In this subsection, we present a polynomial-time algorithm solving \PMC\ in $4$-chordal graphs. 
Recall that, given a perfect matching cut $(X,Y)$ of a graph, every vertex $v\in X$ (respectively $v\in Y$) has exactly one neighbor $p(v)\in Y$ (respectively $p(v)\in X$) which is the private neighbor of $v$ with respect to $(X,Y)$. 
Our algorithm first partitions the vertex set of the input graph~$G$ into breadth-first-search levels, and then proceeds bottom-up from the last level to the root to create an equivalent \textsc{2-cnf} formula~$\phi$: $G$ has a perfect matching cut if and only if $\phi$ is satisfiable. 
In doing so, in each level, the algorithm will construct certain \textsc{2-cnf} clauses encoding the private neighbors of vertices in that level. If this is not possible, the algorithm will correctly decide that~$G$ has no perfect matching cut. Otherwise, the resulting \textsc{2sat}-instance $\phi$ is satisfiable if and only if~$G$ has a perfect matching cut.
 
Given a connected $4$-chordal graph $G=(V,E)$, we fix an arbitrary vertex~$r$ and partition~$V$ into \emph{levels} $L_i$ of vertices at distance $i$ from $r$,  
\[L_i=\{v\in V\mid \text{dist}(r,v)=i\}, i=0, 1, 2, \ldots\]
Let $h$ be the largest integer such that $L_h\not=\emptyset$ but $L_{h+1}=\emptyset$. 
Note that $L_0=\{r\}$ and that all level sets $L_i$, $0\le i\le h$, can be computed in linear time by running a breadth-first-search at $r$. To avoid triviality, we assume that $h\ge 2$. 
(If $h=1$ then $r$ is a universal vertex. Graphs with at least three vertices and a universal vertex have no perfect matching cut.) 

The following properties, Lemmas~\ref{lem:levels1} and~\ref{lem:levels2} below, of the level partition will be useful.
\begin{lemma}\label{lem:levels1} 
For any $2\le i\le h$ and any vertex $v\in L_i$, 
%
all vertices of any non-empty independent set in $N(v)\cap L_{i-1}$ have a common neighbor in $L_{i-2}$. 
\end{lemma}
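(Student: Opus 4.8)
The plan is to argue by contradiction using the $4$-chordality of $G$, exploiting that a large induced cycle would otherwise appear. Suppose $I \subseteq N(v)\cap L_{i-1}$ is a non-empty independent set whose vertices do \emph{not} all have a common neighbor in $L_{i-2}$. If $|I| = 1$, the claim is immediate, since any vertex $u \in L_{i-1}$ with $i-1 \ge 1$ has at least one neighbor in $L_{i-2}$ (that is how BFS levels are defined). So assume $|I| \ge 2$, and pick two vertices $u_1, u_2 \in I$ that have no common neighbor in $L_{i-2}$. Each $u_j$ has \emph{some} neighbor $w_j \in L_{i-2}$, and by assumption $w_1 \neq w_2$; moreover $u_1 u_2 \notin E$ since $I$ is independent. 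Now I would look at the cycle-like structure formed by $v, u_1, w_1, \dots, w_2, u_2, v$: we have edges $v u_1$, $u_1 w_1$, $w_2 u_2$, $u_2 v$, and we need to connect $w_1$ to $w_2$ inside the lower levels $L_0 \cup \dots \cup L_{i-2}$.

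The key step is to choose $w_1, w_2$ and a connecting path carefully so that the resulting closed walk is (or contains) an induced cycle of length $\ge 5$. Since $w_1, w_2 \in L_{i-2}$ both lie in the connected subgraph $G[L_0 \cup \dots \cup L_{i-2}]$ (again a BFS property: every vertex in level $j \ge 1$ has a neighbor in level $j-1$, so each level set together with the lower ones is connected), there is a path $P$ from $w_1$ to $w_2$ using only vertices of levels $\le i-2$; take $P$ to be a \emph{shortest} such path, so $P$ is chordless. None of the internal vertices of $P$ is adjacent to $v$, $u_1$, or $u_2$, because those vertices lie in levels $i$ and $i-1$ while $P$ lives in levels $\le i-2$, and BFS forbids edges skipping a level. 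Also $w_1$ is not adjacent to $v$ (levels $i-2$ and $i$), and similarly $w_2 v \notin E$; the only possible chords of the closed walk $v, u_1, w_1, P, w_2, u_2, v$ are the edges $u_1 w_2$, $u_2 w_1$, $u_1 u_2$ (excluded), and $w_1 w_2$. I would handle the possible chords $u_1 w_2$, $u_2 w_1$, $w_1 w_2$ by a short case analysis: if, say, $u_1 w_2 \in E$ then $w_2$ is a common neighbor of $u_1$ and $u_2$ in $L_{i-2}$, contradicting the choice of $u_1,u_2$; and if $w_1 w_2 \in E$ we can shortcut to the $5$-cycle $v, u_1, w_1, w_2, u_2, v$, which (after discarding the now-irrelevant interior of $P$) is induced by the same argument. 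In every surviving case we obtain an induced cycle of length at least $5$ in $G$, contradicting that $G$ is $4$-chordal.

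The main obstacle I anticipate is making the case analysis on chords airtight, in particular the possibility that $w_1$ and $w_2$ can be \emph{reselected}: $u_1$ or $u_2$ may have several neighbors in $L_{i-2}$, and a naive choice might always produce a chord. The clean way around this is to first fix the requirement that $u_1, u_2 \in I$ share \emph{no} common neighbor in $L_{i-2}$ (such a pair exists, else by transitivity-of-a-sort one can stitch together a common neighbor for all of $I$ — this sub-argument itself needs a moment's care, possibly another small induced-cycle argument among the $w$'s), and only then choose $w_1$ adjacent to $u_1$, $w_2$ adjacent to $u_2$ arbitrarily; with $w_1 \neq w_2$ guaranteed, the chord $u_1 w_2$ or $u_2 w_1$ immediately contradicts the ``no common neighbor'' choice, so those chords simply cannot occur, and only $w_1 w_2 \in E$ remains, dispatched by the $5$-cycle shortcut above. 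Thus the whole argument reduces to: pick the right pair in $I$, pick a shortest connecting path below, and read off a forbidden long hole.
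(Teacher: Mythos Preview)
Your overall strategy---build a long induced cycle through $v$, two vertices of $I$, and a connecting path in the lower levels---matches the paper's, but there are two genuine gaps.

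First, your claim that no internal vertex of $P$ is adjacent to $u_1$ or $u_2$ is false as stated: $P$ lives in levels $\le i-2$, and BFS does \emph{not} forbid edges between $L_{i-2}$ and $L_{i-1}$. An internal vertex of $P$ lying in $L_{i-2}$ may well be adjacent to $u_1$ or $u_2$, producing a chord. The paper avoids this by routing the connecting path through $G[L_0\cup\cdots\cup L_{i-3}\cup\{w_1,w_2\}]$, so that every \emph{internal} vertex sits in a level $\le i-3$ and is therefore genuinely non-adjacent to $u_1,u_2\in L_{i-1}$ (and to $v\in L_i$).

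Second, and more fundamentally, your ``transitivity-of-a-sort'' step does not go through: pairwise common neighbors in $L_{i-2}$ do not imply a global common neighbor, so you cannot assume a pair $u_1,u_2\in I$ with \emph{no} common neighbor exists. The paper closes this gap by proving a strictly stronger pairwise statement: for any two non-adjacent $u_1,u_2\in N(v)\cap L_{i-1}$, the sets $N(u_1)\cap L_{i-2}$ and $N(u_2)\cap L_{i-2}$ are \emph{nested} (one contains the other). The contradiction for nestedness uses exactly your cycle construction, but with $w_1\in (N(u_1)\setminus N(u_2))\cap L_{i-2}$ and $w_2\in (N(u_2)\setminus N(u_1))\cap L_{i-2}$, which also kills the chords $u_1w_2$ and $u_2w_1$ by choice rather than by the unavailable ``no common neighbor'' assumption. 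From nestedness, the global common neighbor drops out: a vertex $w\in L_{i-2}$ maximizing $|N(w)\cap I|$ must be adjacent to all of $I$, since any $u_1\in I\setminus N(w)$ together with its own neighbor $w'\in L_{i-2}$ and some $u_2\in N(w)\cap I\setminus N(w')$ would violate nestedness.
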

\begin{proof}
We prove a slightly stronger statement: for any vertex $v\in L_i$, every two non-adjacent vertices $u_1, u_2\in N(v)\cap L_{i-1}$ satisfy $N(u_1)\cap L_{i-2}\subseteq N(u_2)\cap L_{i-2}$ or $N(u_2)\cap L_{i-2}\subseteq N(u_1)\cap L_{i-2}$. 

Observe that the statement is clearly true in case $i=2$, hence let $i>2$.  
Suppose the contrary that there are vertices $w_1\in \big(N(u_1)\setminus N(u_2)\big)\cap L_{i-2}$ and $w_2\in \big(N(u_2)\setminus N(u_1)\big)\cap L_{i-2}$. Then, if $w_1w_2$ is an edge of $G$ then $vu_1w_1w_2u_2v$ is an induced $5$-cycle in~$G$. If $w_1w_2$ is not an edge of $G$ then a shortest path in $G[L_0\cup L_1\cup\cdots\cup L_{i-3}\cup\{w_1,w_2\}]$ connecting $w_1$ and $w_2$ together with $v,u_1$ and $u_2$ form an induced cycle of length at least~$6$ in $G$. In any of these two cases, we reach a contradiction because $G$ is $4$-chordal. Thus,  every two non-adjacent vertices $u_1, u_2\in N(v)\cap L_{i-1}$ satisfy $N(u_1)\cap L_{i-2}\subseteq N(u_2)\cap L_{i-2}$ or $N(u_2)\cap L_{i-2}\subseteq N(u_1)\cap L_{i-2}$. 

Now, let $S\subseteq N(v)\cap L_{i-1}$ be an independent set and consider a vertex $w\in L_{i-2}$ with maximum $|N(w)\cap S|$. 
Then $N(w)\cap S=S$, that is,~$w$ is adjacent to all vertices in~$S$. For otherwise, let $u_1\in S\setminus N(w)$ and let $w'\in L_{i-2}$ be a neighbor of~$u_1$. By the choice of $w$, $w'$ is non-adjacent to a vertex $u_2\in N(w)\cap S$. But then the two vertices~$u_1$ and~$u_2$ contradict the statement above. 
\qed
\end{proof}

Recall that a vertex set $S\subseteq V$ is monochromatic if, for every matching cut $(X,Y)$ of~$G$,~$S$ is contained in $X$ or else in $Y$. Observe that if $G[S]$ is connected then~$S$ is monochromatic if and only if each edge $\{x,y\}$ of $G[S]$ is monochromatic. 
\begin{lemma}\label{lem:levels2} 
For any $1\le i\le h$, every connected component of $G[L_i]$ is monochromatic. In particular, for each vertex $v\in L_i$, $N[v]\cap L_i$ is monochromatic.
\end{lemma}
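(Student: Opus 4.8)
The plan is to reduce the claim about connected components of $G[L_i]$ to the claim about single edges, using the observation recalled just before the lemma statement: if $G[S]$ is connected, then $S$ is monochromatic iff every edge of $G[S]$ is monochromatic. So it suffices to show that for any edge $\{x,y\}$ with $x,y\in L_i$, the set $\{x,y\}$ is monochromatic, i.e., in any matching cut $(X,Y)$ of $G$, $x$ and $y$ lie in the same part. The ``in particular'' part then follows because $N[v]\cap L_i$ induces a connected subgraph (every vertex of $N(v)\cap L_i$ is adjacent to $v$), so it is a subset of a single connected component of $G[L_i]$, hence monochromatic.

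Next I would handle the base case $i=1$ separately, since the argument below uses level $L_{i-2}$, which does not exist when $i=1$. For $i=1$: every vertex of $L_1$ is adjacent to the root $r\in L_0$. If $\{x,y\}$ is an edge with $x,y\in L_1$, then $\{r,x,y\}$ is a triangle (a clique on $3$ vertices), hence monochromatic; so $x,y$ lie in the same part. More generally any connected component of $G[L_1]$ together with $r$ forms a connected graph in which $r$ is adjacent to everything, so it is monochromatic by repeatedly applying the rule ``if $S$ is monochromatic and $v$ has two neighbors in $S$ then $S\cup\{v\}$ is monochromatic'' (or directly: each edge $\{x,y\}$ of $L_1$ closes a triangle with $r$).

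For the inductive step, fix $i$ with $2\le i\le h$ and an edge $\{x,y\}$ with $x,y\in L_i$. Pick a neighbor $u$ of $x$ in $L_{i-1}$ (exists since $x\in L_i$); I claim $u$ is also adjacent to $y$. If not, then $u,y$ are non-adjacent, and a shortest $u$--$y$ path inside $G[L_{i-1}\cup L_i]$ would, together with the edge $\{x,y\}$ and the edge $\{u,x\}$, produce an induced cycle of length $\ge 5$ in $G$ unless $u$ and $y$ have a common neighbor in $L_{i-1}$ or the short path has length $2$; the cleanest route is: since both $u$ and $y$ are at distance $i$ resp.\ $i-1$... hmm, let me instead argue via $L_{i-2}$. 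By Lemma~\ref{lem:levels1} applied to $x\in L_i$: if $u$ is the unique neighbor of $x$ in $L_{i-1}$, then trivially $N(x)\cap L_{i-1}=\{u\}$ is monochromatic and we note $\{u,x,y\}$: actually what I really want is that $\{x,y\}\cup(N(x)\cap L_{i-1})$ contains a witness making $\{x,y\}$ monochromatic.

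The real engine is Lemma~\ref{lem:levels1}: the set $N(x)\cap L_{i-1}$ is non-empty; let $I$ be a maximal independent subset of it. By Lemma~\ref{lem:levels1}, all vertices of $I$ have a common neighbor $w\in L_{i-2}$. Now consider the set $S=\{w\}\cup I\cup\{x\}$. Since $w$ is adjacent to all of $I$ and $|I|\ge 1$... if $|I|\ge 2$ then $\{w\}\cup I$ contains a $K_{2,s}$-like or at least $w$ has two neighbors in the monochromatic set... I should be more careful. Here is the clean version I will write: $x$ together with any two non-adjacent neighbors $u_1,u_2\in N(x)\cap L_{i-1}$ and their common neighbor $w\in L_{i-2}$ (from Lemma~\ref{lem:levels1}) form a graph containing the $4$-cycle $w u_1 x u_2 w$ with the chord... no, $w$--$x$ need not be an edge; $\{u_1,u_2,x,w\}$ induces a $4$-cycle, and a $4$-cycle is $K_{2,2}$, which is \emph{not} guaranteed monochromatic. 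So the right move is: take the independent set $I$ of size $\ge 2$ if it exists, its common neighbor $w\in L_{i-2}$; then $w,x$ both have $\ge 2$ neighbors in $I$, so $\{x\}\cup I$ is monochromatic and $\{w\}\cup I$ is monochromatic, hence $\{w,x\}\cup I$ is monochromatic (it is connected and each edge lies in a $K_{2,s}$, $s\ge 2$, or a triangle... need $s\ge 3$ actually).

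Let me restate the intended clean proof so the splice is correct. For the write-up I would instead proceed as follows, and flag the last bit as the main obstacle to get exactly right:

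\begin{proof}
Since $G[S]$ is connected implies $S$ monochromatic iff every edge of $G[S]$ is monochromatic (as recalled above), and since $N[v]\cap L_i$ induces a connected graph contained in one component of $G[L_i]$, it suffices to prove that every edge $\{x,y\}$ of $G[L_i]$ is monochromatic.

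For $i=1$, $\{r,x,y\}$ is a triangle, hence monochromatic.

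Let $2\le i\le h$ and let $\{x,y\}$ be an edge of $G[L_i]$. Consider $A=N(x)\cap L_{i-1}$ and $B=N(y)\cap L_{i-1}$; both are non-empty. We first show $A\cap B\neq\emptyset$. Suppose not. Take $u\in A$, $v\in B$ at minimum distance in $G[L_{i-1}\cup L_i]$ (such a path exists because $G$ is connected and all of $A\cup B$ lies in $L_{i-1}$ which is joined to $L_i$; more precisely, $u$ and $v$ are connected inside $L_{i-1}\cup L_i$ via a $u$--$x$--$y$--$v$ walk). A shortest $u$--$v$ path $P$ in $G[L_{i-1}\cup L_i]$ avoiding $x,y$, together with $x$ and $y$, forms a cycle; choosing $P$ shortest makes this cycle induced, of length $\ge 4$; it has length $4$ only if $u,v$ have a common neighbor $z$ in $L_{i-1}\cup L_i$, and since $u,v\in L_{i-1}$ and $x,y\in L_i$, such a $z$ cannot equal $x$ or $y$ (as $A\cap B=\emptyset$), forcing a chordless cycle of length $5$ (namely $x\,u\,z\,v\,y\,x$ if $z\in L_{i-1}$) or longer --- contradicting $4$-chordality in every case. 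Hence $A\cap B\neq\emptyset$; fix $w\in A\cap B$.

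Now $\{w,x,y\}$ is a triangle, hence monochromatic. Therefore $x$ and $y$ lie in the same part of every matching cut, i.e.\ $\{x,y\}$ is monochromatic, completing the induction and the proof.
\end{proof}

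The main obstacle, which I have deliberately only sketched above, is the case analysis showing $A\cap B\neq\emptyset$: one must rule out a short induced cycle carefully, and in fact the cleanest argument avoids the ad hoc path surgery by invoking Lemma~\ref{lem:levels1} directly --- apply it to $x\in L_i$ with the (possibly trivial) independent set $\{u\}$ together with a vertex of $B$, or show that if $u\in A\setminus B$ and $v\in B\setminus A$ were non-adjacent then $x,u,v,y$ plus a shortest $u$--$v$ connection yields a hole of length $\ge 5$, while if $uv$ is an edge then $x u v y x$ is an induced $C_4$ which is fine but then we pick $w$ as their common $L_{i-2}$-neighbor from Lemma~\ref{lem:levels1} and use the monochromaticity of $K_{2,3}$ after possibly enlarging the independent set. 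Getting this disjunction airtight --- in particular handling the sub-case where $A$ and $B$ are disjoint but every $u\in A$ is adjacent to every $v\in B$ --- is where the real work lies; the rest is routine.
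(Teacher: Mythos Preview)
Your reduction to edges and the base case $i=1$ are fine and match the paper. The inductive step, however, rests on a claim that is simply false: you try to show that any edge $\{x,y\}$ in $G[L_i]$ lies in a triangle with a common neighbour $w\in L_{i-1}$, i.e.\ that $A\cap B\neq\emptyset$ where $A=N(x)\cap L_{i-1}$ and $B=N(y)\cap L_{i-1}$. A small counterexample: take $r$ adjacent to $x',y'$, with $x'y'$ an edge, and $x\in L_2$ adjacent only to $x'$ and $y$, and $y\in L_2$ adjacent only to $y'$ and $x$. This graph is $4$-chordal, $\{x,y\}$ is an edge of $G[L_2]$, and $A=\{x'\}$, $B=\{y'\}$ are disjoint. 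Your cycle argument cannot rule this out because $x\,x'\,y'\,y\,x$ is an induced $C_4$, which is permitted in a $4$-chordal graph. Your fallback suggestion of invoking Lemma~\ref{lem:levels1} to get a common $L_{i-2}$-neighbour of $u$ and $v$ does not apply either, since $u$ and $v$ are not both in $N(z)\cap L_{i-1}$ for a single $z\in L_i$.

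The missing idea is precisely the one your write-up never actually uses: the induction hypothesis on $L_{i-1}$. The paper's argument is: if $\{x,y\}$ is not already in a triangle, pick $x'\in N(x)\cap L_{i-1}$ and $y'\in N(y)\cap L_{i-1}$; then $x'\ne y'$, $x'y\notin E$, $xy'\notin E$, and $4$-chordality forces $x'y'\in E$ (otherwise a shortest $x'$--$y'$ path through lower levels plus $x,y$ gives a hole of length $\ge 5$). Now $\{x',y'\}$ is an edge inside $L_{i-1}$, so by induction it is monochromatic; since $x\,x'\,y'\,y$ is an induced $C_4$, monochromaticity of $\{x',y'\}$ forces monochromaticity of $\{x,y\}$ (a vertex with two neighbours on the other side is impossible). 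You correctly sensed that the ``$A,B$ disjoint but complete bipartite'' sub-case is where the work lies, but the resolution is not a triangle or a $K_{2,3}$; it is the inductive transfer across an induced $C_4$.
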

\begin{proof}
We proceed by induction on $i$. Let $A$ be a connected component in $G[L_1]$ with at least two vertices. Since all vertices in~$A$ are adjacent to the root~$r$, every edge in~$A$ forms with~$r$ a triangle. Hence each edge in~$A$ is monochromatic, and the statement is true for $i=1$.

Let us assume that the statement is true for $i<h$, and consider a connected component $A$ of $G[L_{i+1}]$. Let $\{x,y\}$ be an edge of $A$ which is not contained in a triangle (otherwise $\{x,y\}$ is monochromatic and we are done), and let $x'$, respectively, $y'$ be a neighbor in $L_i$ of $x$, respectively, $y$. Then $x'$ is not adjacent to~$y$ and~$y'$ is not adjacent to~$x$; in particular, $x'\not=y'$. Now, $x'$ and $y'$ must be adjacent, otherwise a shortest path in $G[L_0\cup L_1\cup\cdots\cup L_{i-1}\cup\{x',y'\}]$ connecting~$x'$ and~$y'$ together with~$x$ and~$y$ would form an induced cycle of length at least~$5$ in~$G$.  
Now, by induction hypothesis, $\{x',y'\}$ is monochromatic, implying $\{x,y\}$ is monochromatic, too. 
(Otherwise, $G$ has no (perfect) matching cut; $x$ or $y$ would have two neighbors in the other part.) 
\qed
\end{proof}

Now, if $G$ has a perfect matching cut then, by Lemma~\ref{lem:levels2}, every vertex $v\in L_i$ must have its private neighbor $p(v)$ in $L_{i-1}$ or in $L_{i+1}$. 
In particular, every vertex in $L_h$ must have its private neighbor in $L_{h-1}$, and as we will see, its private neighbor can correctly be determined or encoded by at most four \textsc{2-cnf} clauses. Thus, after all vertices in the last level $L_h$ are assigned to their private neighbors in $L_{h-1}$, the unassigned vertices in $L_{h-1}$, if any, must have their private neighbors in $L_{h-2}$. 
So, we will proceed from the bottom up to assign every vertex to its private neighbor by creating a \textsc{2-cnf} formula or, in case some vertex cannot be assigned, report that~$G$ does not have any perfect matching cut. 

To this end, we define a vertex subset $Q\subseteq V(G)$ of a graph $G$ to be \emph{determined} if, for every perfect matching cut of $G$, every vertex $v\in Q$ has its private neighbor $p(v)\in Q$. 
Vertices in $Q$ are determined, all other are undetermined. Note that if $G$ has a perfect matching cut then every vertex $v\not\in Q$ has \emph{non-monochromatic} $N[v]\setminus Q$ (as the private neighbor of $v$ exists and is outside $Q$). 
Then, we call a vertex $v\in L_i$ a \emph{leaf} if $(N(v)\cap L_{i+1}\setminus Q)\cup\{v\}$ is monochromatic. 
Note that, since $L_{h+1}=\emptyset$, all vertices in the last level $L_h$ are leaves. 

The correctness of our approach is based on the following crucial fact.
\begin{lemma}\label{lem:leaf}
Let $G=(V,E)$ be a $4$-chordal graph and let $Q\subseteq V$ be a determined set. 
Assuming $G$ has a perfect matching cut, any leaf $v\in L_i\setminus Q$, $i>0$, satisfies exactly one of the following conditions.
\begin{itemize}
\item[\em (c1)] $v$ has exactly one neighbor, $u$, in $L_{i-1}\setminus Q$. 
In this case, the private neighbor of $v$ is $u$; 
\item[\em (c2)] $v$ has exactly two neighbors, $u_1$ and $u_2$, in $L_{i-1}\setminus Q$, and 
there is a vertex $w\in L_{i-2}\setminus Q$ adjacent to both $u_1$ and $u_2$. 
In this case, one of $u_1,u_2$ is the private neighbor of $v$ and the other is the private neighbor of $w$; 
\item[\em (c3)] $v$ has more than two neighbors in $L_{i-1}\setminus Q$, one in a component $A$ and all others in another component $B$ of $G[L_{i-1}\setminus Q]$. 
In this case, the private neighbor of $v$ is its neighbor in $A$. 
\end{itemize}
\end{lemma}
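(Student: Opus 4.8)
The plan is to fix a perfect matching cut $(X,Y)$ of $G$ and analyze the neighbourhood of a leaf $v\in L_i\setminus Q$ inside $L_{i-1}\setminus Q$. First I would record three structural facts that drive everything. (a) Since $v$ is a leaf, $(N(v)\cap L_{i+1}\setminus Q)\cup\{v\}$ is monochromatic, so the private neighbour $p(v)$ cannot lie in $L_{i+1}$; also $p(v)\notin L_i$ because $N[v]\cap L_i$ is monochromatic by Lemma~\ref{lem:levels2}; and $p(v)\notin Q$ because $v\notin Q$ and $Q$ is determined. Hence $p(v)\in L_{i-1}\setminus Q$, so $v$ has at least one neighbour there and in particular none of (c1)--(c3) is vacuous. (b) By Lemma~\ref{lem:levels2}, each connected component of $G[L_{i-1}]$ — and in particular each component of $G[L_{i-1}\setminus Q]$ — is monochromatic, so all neighbours of $v$ lying in one component of $G[L_{i-1}\setminus Q]$ get the same colour, and exactly one of these components has the colour opposite to $v$ (the one containing $p(v)$). (c) Because $v$ has exactly one neighbour of the opposite colour, at most one of $v$'s neighbours in $L_{i-1}\setminus Q$ can have the opposite colour to $v$; all the others are monochromatic with $v$.

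Next I would split according to $|N(v)\cap L_{i-1}\setminus Q|$ and the component structure. If $v$ has a unique neighbour $u$ in $L_{i-1}\setminus Q$, then since $p(v)$ must be such a neighbour we get $p(v)=u$, which is (c1). If $v$ has exactly two neighbours $u_1,u_2$ in $L_{i-1}\setminus Q$, then one of them, say $u_1$, is $p(v)$ (opposite colour), and $u_2$ is monochromatic with $v$; apply Lemma~\ref{lem:levels1} to the independent-or-not pair $\{u_1,u_2\}\subseteq N(v)\cap L_{i-1}$: if $u_1u_2$ is an edge then $u_1,u_2$ are in the same component of $G[L_{i-1}]$ and, being opposite colours, this would contradict monochromaticity, so $u_1u_2$ is a non-edge and hence by Lemma~\ref{lem:levels1} there is a common neighbour $w\in L_{i-2}$ of $u_1$ and $u_2$. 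One still has to check $w\notin Q$: if $w\in Q$ then $p(w)\in Q$, but $w$ has the colour of one of $u_1,u_2$ and is adjacent to both, so $w$ would have a neighbour ($u_1$ or $u_2$) of the opposite colour outside $Q$ being forced to equal $p(w)$ — contradiction; so $w\in L_{i-2}\setminus Q$. Since $u_2$ has colour of $v$ and $w$ is adjacent to $u_1$ (opposite colour), $w$ has the colour of $v$ as well, so $w$'s private neighbour is $u_1$ or $u_2$; whichever of $u_1,u_2$ is $p(v)$, the other must then be the unique opposite-colour neighbour of $w$, i.e. $p(w)$. This is exactly (c2). Finally, if $v$ has at least three neighbours in $L_{i-1}\setminus Q$: by fact (c) at most one has the opposite colour, so at least two are monochromatic with $v$; these two monochromatic-with-$v$ neighbours cannot be joined by an edge to the opposite-colour neighbour $p(v)$ inside $G[L_{i-1}]$ without breaking monochromaticity, and in fact, by the same component argument, $p(v)$ lies in its own component $A$ of $G[L_{i-1}\setminus Q]$ while all the $\ge 2$ others lie in component(s) of colour $v$; a short argument using Lemma~\ref{lem:levels1} (non-adjacent same-colour pairs) plus $4$-chordality shows all the same-colour neighbours actually lie in a single component $B$ (otherwise two of them together with $v$ and a connecting path through lower levels, or through a common $L_{i-2}$ neighbour, produce a long hole). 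Then $p(v)$ is the unique neighbour of $v$ in $A$, which is (c3). Mutual exclusivity of (c1)--(c3) is immediate from the count $|N(v)\cap L_{i-1}\setminus Q|\in\{1\}$, $\{2\}$, $\{\ge 3\}$.

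The main obstacle I anticipate is case (c3): ruling out the possibility that the $\ge 2$ monochromatic-with-$v$ neighbours split across two or more components of $G[L_{i-1}\setminus Q]$, and simultaneously ruling out that $p(v)$ shares its component with some same-colour neighbour. Both require producing an induced cycle of length $\ge 5$ from two such neighbours $u,u'$, the vertex $v$, and a carefully chosen shortest connection — either a common neighbour in $L_{i-2}$ furnished by Lemma~\ref{lem:levels1} when $uu'$ is a non-edge, or a shortest path in $G[L_0\cup\cdots\cup L_{i-2}]$ — while checking that no chords appear (in particular that such a path avoids being shortcut through $v$ or through $Q$). Getting the "single component $B$" conclusion cleanly, rather than merely "at most two components," is the delicate point, and it is where the $4$-chordality hypothesis does the real work.
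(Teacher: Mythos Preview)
Your organization differs from the paper's---you split by the cardinality $|N(v)\cap L_{i-1}\setminus Q|$, whereas the paper splits by the number of connected components of $G[L_{i-1}\setminus Q]$ meeting $N(v)$---but the two are close in spirit. The setup in your facts (a)--(c) is fine. There are, however, two genuine problems.

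\textbf{A wrong claim in (c2).} You assert that ``$w$ has the colour of $v$ as well.'' In fact the opposite holds: if $p(v)=u_1$ and $w$ were the colour of $v$, then $u_1$ would have two opposite-colour neighbours $v$ and $w$, contradicting the matching-cut property. So $w$ has the colour of $u_1$, and then $p(w)=u_2$. Your conclusion for (c2) is still correct, but the route you take to it is not; the clean argument (as in the paper) is simply: $u_1,u_2$ have different colours, and both $v$ and $w$ are adjacent to both of them, so each of $v,w$ has its private neighbour among $\{u_1,u_2\}$; they cannot share the same one, since that vertex would then have two opposite-colour neighbours.

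\textbf{The (c3) gap you flag is real, and your proposed fix does not work.} You try to force the same-colour neighbours into a single component~$B$ by producing a long induced cycle from two such neighbours $u,u'$ together with $v$ and ``a common $L_{i-2}$ neighbour'' or a path through lower levels. But the common neighbour $w\in L_{i-2}$ furnished by Lemma~\ref{lem:levels1} yields the $4$-cycle $vuw u'v$, not a long hole---$4$-chordality permits this. The correct mechanism is \emph{monochromaticity}, not forbidden cycles. If $N(v)\cap L_{i-1}\setminus Q$ meets three or more components of $G[L_{i-1}\setminus Q]$, pick one representative $u_j$ from each; these form an independent set in $N(v)\cap L_{i-1}$, so by Lemma~\ref{lem:levels1} they share a common neighbour $w\in L_{i-2}$. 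Then $\{v,w,u_1,\ldots,u_k\}$ induces a $K_{2,k}$ with $k\ge 3$, which is monochromatic; combined with Lemma~\ref{lem:levels2} this makes $N[v]\setminus Q$ monochromatic, contradicting $p(v)\in L_{i-1}\setminus Q$. This single observation replaces your entire ``single component $B$'' discussion and simultaneously handles the case you worried about where $p(v)$ might share its component with a same-colour neighbour: once you know there are at most two components, one of them must contain exactly one neighbour of $v$ (else both contribute $\ge 2$ and again $N[v]\setminus Q$ is monochromatic), and that lone neighbour is $p(v)$.
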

\begin{proof}
Observe first that any leaf $v\in L_i\setminus Q$ has a neighbor $u\in L_{i-1}\setminus Q$ since otherwise $N[v]\setminus Q$ would be monochromatic. 
Furthermore, if a leaf $v\in L_i\setminus Q$ has exactly one neighbor,~$u$, in $L_{i-1}\setminus Q$ then, by Lemma~\ref{lem:levels2}, $u$ must be the private neighbor of $v$ in any perfect matching cut of~$G$. 
In particular, in case $i=1$ condition (c1) clearly holds. So, let~$i>1$.  

Next, assume that $N(v)\cap L_{i-1}\setminus Q$ is contained in $k\ge 3$ connected components $A_1, A_2, \ldots A_k$ of $G[L_{i-1}\setminus Q]$, and let $u_j\in N(v)\cap A_j$, $1\le j\le k$. 
By Lemma~\ref{lem:levels1}, there is a vertex $w\in L_{i-2}$ adjacent to all $u_j$'s, and so $\{v,w,u_1,u_2,\ldots, u_k\}$ induces a $K_{2,k}$ which is monochromatic. 
Note that each $A_j$ is contained in some connected component of $G[L_{i-1}]$. Hence, by Lemma~\ref{lem:levels2}, it follows that $N[v]\setminus Q$ is monochromatic, a contradiction. 
Thus, $N(v)\cap L_{i-1}\setminus Q$ is contained in at most two connected components of $G[L_{i-1}\setminus Q]$. 

If $N(v)\cap L_{i-1}\setminus Q$ is contained in exactly one connected component of $G[L_{i-1}\setminus Q]$, say $A$, then 
$|N(v)\cap A|=1$, that is, (c1) holds. This because if $|N(v)\cap A|\ge 2$ then, by Lemma~\ref{lem:levels2} again, $A\cup \{v\}$ would be monochromatic, and therefore $N[v]\setminus Q$ would be monochromatic. 

So, let us assume that $N(v)\cap L_{i-1}\setminus Q$ is contained in exactly two connected components of $G[L_{i-1}\setminus Q]$, say~$A$ and~$B$. 
Then $|N(v)\cap A|=1$ or $|N(v)\cap B|=1$, otherwise by Lemma~\ref{lem:levels2} again, $N[v]\setminus Q$ would be monochromatic. 

If $|N(v)\cap B|\ge 2$, say, then the private neighbor of $v$ must be its neighbor in $A$ because $N[v]\setminus A$ is monochromatic, and $v$ therefore satisfies condition (c3).   

It remains to consider the case $|N(v)\cap A|=1$ and $|N(v)\cap B|=1$.  Let $N(v)\cap A=\{u_1\}$ and $N(v)\cap B=\{u_2\}$. By Lemma~\ref{lem:levels1}, there is a vertex $w\in L_{i-2}$ adjacent to $u_1$ and $u_2$, and thus $C: vu_1u_2wv$ is an induced $4$-cycle. 
Note that $w$ is undetermined: were $w\in Q$ (so its private neighbor $p(w)\in Q$), were $N[w]\setminus\{p(w)\}$, hence $N[v]\setminus Q$,  monochromatic. Thus, $w\not\in Q$. 
Now, $u_1$ and~$u_2$ must belong to different parts of any perfect matching cut of $G$, since otherwise $\{v,u_1,u_2\}$, hence $N[v]\setminus Q$, would be monochromatic. This implies that one of $u_1, u_2$ is the private neighbor of $v$ and the other is the private neighbor of $w$. Thus, $v$ satisfies condition (c2). 
\qed
\end{proof}

Now, to decide whether $G$ admits a perfect matching cut we proceed as follows. 
First, we create for each vertex $v$ of $G$ a Boolean variable, which is also denoted by~$v$ as the context will be clear. Then, at the beginning, all vertices are undetermined, $Q=\emptyset$, and we process bottom-up with a leaf~$v$ in $L_h$ by checking if~$v$ satisfies one of the conditions in Lemma~\ref{lem:leaf}. If not, $G$ has no perfect matching cut. Otherwise, we create certain \textsc{2-cnf} clauses that assign~$v$ to its private neighbor $p(v)$, and force~$v$ and~$x$, and $p(v)$ and~$y$ to be monochromatic for all other neighbors $x\not=p(v)$ of~$v$ and all other neighbors $y\not=v$ of $p(v)$. 
Then~$v$ and its private neighbor $p(v)$ become determined. 
This way we will correctly decide if $G$ has no perfect matching cut or we will successfully construct a \textsc{2sat}-instance $\phi$ such that~$G$ has a perfect matching cut if and only if~$\phi$ is satisfiable.

The details are given in Algorithm~\ref{algo:PMC}. At this point, we note that, for condition (c2) in Lemma~\ref{lem:leaf}, the fact that one of $u_1, u_2$ is the private neighbor of $v$ and the other is the private neighbor of $w$ can be equivalently stated that $\{u_1,u_2\}$ and $\{v,w\}$ are non-monochromatic. The latter can be expressed by four \textsc{2-cnf} clauses (lines~\ref{line:c2-private} in the algorithm). 

\begin{algorithm}[!ht]
\SetNlSty{tiny}{}{} 
\DontPrintSemicolon 
\KwIn{A connected $4$-chordal graph $G=(V,E)$}
\KwOut{Either a pmc $(X,Y)$ of $G$, or `NO' if $G$ has no pmc} 
\BlankLine

\SetKw{DownTo}{downto}

fix a vertex $r\in V$ and compute the BFS-levels $L_i$ from $r$, $0\le i\le h$\; 
create for each vertex $v\in V$ a Boolean variable $v$\;
$Q\leftarrow \emptyset$; $\phi\leftarrow \emptyset$\;

\For{$i\leftarrow h$ \DownTo $1$}{\label{line:for-loop}
   \ForEach{$v\in L_i\setminus Q$}{\label{line:foreach-v-Li} 
     \lIf{$v$ does not satisfy any of (c1), (c2) and (c3) in Lemma~\ref{lem:leaf}}{\label{line:NO-c1c2c3}\Return `NO'} 
     \If{$v$ satisfies (c1) or (c3) in Lemma~\ref{lem:leaf}}{\label{line:c1c3}
        let $A$ be the connected component of $G[L_{i-1}\setminus Q]$ s.t. $N(v)\cap A=\{u\}$\;
        $\phi\leftarrow \phi\cup\{(v\lor u),(\neg v\lor \neg u)\}$\;\label{line:c1c3-private} 
        $Q\leftarrow Q\cup\{v,u\}$\;\label{line:c1c3-assigned}
        \lForEach{$x\in N(v)\setminus Q$}{
         $\phi\leftarrow \phi\cup\{(v\lor \neg x),(\neg v\lor x)\}$\label{line:c1c3-others-v}
         }
        \lForEach{$x\in N(u)\setminus Q$}{
         $\phi\leftarrow \phi\cup\{(u\lor \neg x),(\neg u\lor x)\}$\label{line:c1c3-others-u}
         }
        }
     \If{$v$ satisfies (c2) in Lemma~\ref{lem:leaf}}{\label{line:c2}
         let $u_1, u_2$ be the two neighbors of $v$ in $L_{i-1}\setminus Q$, and 
         let $w\in L_{i-2}\setminus Q$ be such that $v,u_1,u_2,w$ induce a $4$-cycle\;
         $\phi\leftarrow \phi\cup\{(v\lor w),(\neg v\lor \neg w),(u_1\lor u_2),(\neg u_1\lor \neg u_2)\}$\;\label{line:c2-private}
         $Q\leftarrow Q\cup\{v,u_1,u_2,w\}$\;\label{line:c2-assigned}
        \lForEach{$x\in N(v)\setminus Q$}{
         $\phi\leftarrow \phi\cup\{(v\lor \neg x),(\neg v\lor x)\}$\label{line:c2-others-v}
         }
        \lForEach{$x\in N(w)\setminus Q$}{
         $\phi\leftarrow \phi\cup\{(w\lor \neg x),(\neg w\lor x)\}$\label{line:c2-others-w}
         }
        \lForEach{$x\in N(u_1)\setminus Q$}{
         $\phi\leftarrow \phi\cup\{(u_1\lor \neg x),(\neg u_1\lor x)\}$\label{line:c2-others-u1}
         }
        \lForEach{$x\in N(u_2)\setminus Q$}{
         $\phi\leftarrow \phi\cup\{(u_2\lor \neg x),(\neg u_2\lor x)\}$\label{line:c2-others-u2}
         }
      }
   }
}
\label{line:end-for}
\label{line:2sat}\lIf{$\phi$ is satisfiable}{\Return $(X,Y)$ with $X$ the true and $Y$ the false vertices} 
\lElse{\label{line:NO-2sat}\Return `NO'}

\caption{\texttt{Recognizing~$4$-chordal graphs having pmc}} 
\label{algo:PMC}
\end{algorithm} 

\paragraph*{Two small examples} 
In Fig.~\ref{fig:two-examples}, consider the graph on the left. Letting $r=v_2$, we have $L_0=\{v_2\}$, $L_1=\{v_0,v_1,v_3\}$ and $L_2=\{v_4,v_5\}$.  Depending on the choice of $v\in L_2\setminus Q$ in line~\ref{line:foreach-v-Li} of the algorithm, we have two cases: 

\begin{figure}[H]
\begin{center}
\tikzstyle{vertexS}=[draw,circle,inner sep=2pt,fill=black] 
\tikzstyle{vertex}=[draw,circle,inner sep=1.3pt,fill=black] 
\begin{tikzpicture}[scale=.36] 
\node[vertex] (a) at (1,1)  [label=below:{\small $v_1$}]{};
\node[vertex] (b) at (1,4)  [label=above:{\small $v_0$}]{};
\node[vertex] (c) at (7,4)  [label=above:{\small $v_4$}]{}; 
\node[vertex] (d) at (7,1)  [label=below:{\small $v_5$}]{};
\node[vertex] (e) at (3,2.5)  [label=above:{\small $v_2$}]{};
\node[vertex] (f) at (5,2.5)  [label=above:{\small $v_3$}]{};

\draw (a)--(b)--(e); \draw (e)--(f); \draw (f)--(c)--(d); \draw (d)--(a); \draw (a)--(e);
\draw (f)--(d); 
\end{tikzpicture} 
\qquad
\begin{tikzpicture}[scale=.36] 
\node[vertex] (a) at (1,1)  [label=below:{\small $v_2$}]{};
\node[vertex] (b) at (1,4)  [label=above:{\small $v_0$}]{};
\node[vertex] (c) at (4,4)  [label=above:{\small $v_1$}]{};
\node[vertex] (d) at (4,1)  [label=below:{\small $v_3$}]{};
\node[vertex] (e) at (7,1)  [label=below:{\small $v_5$}]{};
\node[vertex] (f) at (7,4)   [label=above:{\small $v_4$}]{};

\draw (a)--(b); \draw (b)--(c)--(d)--(a);
\draw (d)--(e); \draw (e)--(f); \draw (c)--(f); 
\end{tikzpicture} 
\caption{Two small examples to Algorithm~\ref{algo:PMC}.}\label{fig:two-examples}
\end{center}
\end{figure}
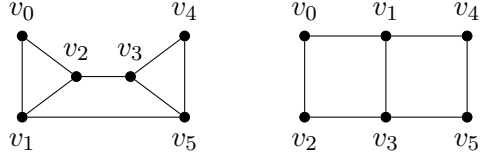

\begin{itemize}
\item Suppose the algorithm first picks $v=v_4$. Then the formula $\phi$ consists of the following clauses: 
$(v_4\lor v_3)$, $(\neg v_4\lor \neg v_3)$ (line~\ref{line:c1c3-private}), $(v_4\lor \neg v_5)$, $(\neg v_4\lor v_5)$, $(v_3\lor \neg v_5)$, $(\neg v_3\lor v_5)$, $(v_3\lor \neg v_2)$, $(\neg v_3\lor v_2)$ (lines~\ref{line:c1c3-others-v}, \ref{line:c1c3-others-u}), 
$(v_5\lor v_1)$, $(\neg v_5\lor \neg v_1)$ (line~\ref{line:c1c3-private}), $(v_1\lor \neg v_0)$, $(\neg v_1\lor v_0)$, $(v_1\lor \neg v_2)$, $(\neg v_1\lor v_2)$ (lines~\ref{line:c1c3-others-v}, \ref{line:c1c3-others-u}), and 
$(v_0\lor v_2)$, $(\neg v_0\lor \neg v_2)$ (line~\ref{line:c1c3-private}). In this case, the output is \lq NO\rq\ as $\phi$ is unsatisfiable (line~\ref{line:NO-2sat}).
\item Suppose the algorithm first picks $v=v_5$. Then the formula $\phi$ consists of the following clauses: 
$(v_5\lor v_2)$, $(\neg v_5\lor \neg v_2)$, $(v_1\lor v_3)$, $(\neg v_1\lor \neg v_3)$ (line~\ref{line:c2-private}), $(v_5\lor \neg v_4)$, $(\neg v_5\lor v_4)$, $(v_2\lor \neg v_0)$, $(\neg v_2\lor v_0)$, $(v_1\lor \neg v_0)$, $(\neg v_1\lor v_0)$, $(v_3\lor \neg v_4)$, $(\neg v_3\lor v_4)$ (lines~\ref{line:c2-others-v}--\ref{line:c2-others-u2}). 
In this case, the output is \lq NO\rq\ as in the next round $v=v_4\in L_2\setminus Q$ does not satisfy any of the conditions (c1), (c2) and (c3) (line~\ref{line:NO-c1c2c3}).
\end{itemize}

Let us consider the second graph in Fig.~\ref{fig:two-examples}, on the right. We want to choose two different roots for this example.
\begin{itemize}
\item 
First let $r=v_0$, implying $L_0=\{v_0\}$, $L_1=\{v_1,v_2\}$, $L_2=\{v_3,v_4\}$ and $L_3=\{v_5\}$. Then the formula $\phi$ consists of the following clauses: 
$(v_5\lor v_1)$, $(\neg v_5\lor \neg v_1)$, $(v_3\lor v_4)$, $(\neg v_3\lor \neg v_4)$ (line~\ref{line:c2-private}), $(v_1\lor \neg v_0)$, $(\neg v_1\lor v_0)$, $(v_3\lor \neg v_2)$, $(\neg v_3\lor v_2)$ (lines~\ref{line:c2-others-v}--\ref{line:c2-others-u2}), and 
$(v_2\lor v_0)$, $(\neg v_2\lor \neg v_0)$ (line~\ref{line:c1c3-private}). In this case, $\phi$ has a satisfying assignment, say $v_0=v_1=v_4= \text{True}$ and $v_2=v_3=v_5=\text{False}$, and the output is the perfect matching cut $(X,Y)$ with $X=\{v_0,v_1,v_4\}$ and $Y=\{v_2,v_3,v_5\}$ (line~\ref{line:2sat}). 
\item 
Next let $r=v_1$, implying $L_0=\{v_1\}$, $L_1=\{v_0,v_3,v_4\}$ and $L_2=\{v_2,v_5\}$. 
By symmetry, we may assume that the algorithm first picks $v=v_2\in L_2\setminus Q$ at line~\ref{line:foreach-v-Li}. Then the formula $\phi$ consists of the following clauses: 
$(v_2\lor v_1)$, $(\neg v_2\lor \neg v_1)$, $(v_0\lor v_3)$, $(\neg v_0\lor \neg v_3)$ (line~\ref{line:c2-private}), $(v_1\lor \neg v_4)$, $(\neg v_1\lor v_4)$, $(v_3\lor \neg v_5)$, $(\neg v_3\lor v_5)$ (lines~\ref{line:c2-others-v}--\ref{line:c2-others-u2}), and 
$(v_5\lor v_4)$, $(\neg v_5\lor \neg v_4)$ (line~\ref{line:c1c3-private}). Again, $\phi$ has a satisfying assignment, say $v_0=v_1=v_4= \text{True}$ and $v_2=v_3=v_5=\text{False}$, and the output is the perfect matching cut $(X,Y)$ with $X=\{v_0,v_1,v_4\}$ and $Y=\{v_2,v_3,v_5\}$ (line~\ref{line:2sat}). 
\end{itemize}
\paragraph*{Time complexity} 
Observe that $\phi$ has $O(|E|)$ clauses. 
Since checking the conditions (c1), (c2) and (c3) in Lemma~\ref{lem:leaf}, as well as deciding if a \textsc{2sat} formula is satisfiable and if so computing a satisfying assignment~\cite{AspvallPT79}, can be done in polynomial time, the running time of Algorithm~\ref{algo:PMC} clearly is polynomial. 

\paragraph*{Correctness}
For the correctness, we first show:
\begin{fact}\label{fact:correctness-for}
After any round $i$ of the for-loop, lines~\ref{line:for-loop}--\ref{line:end-for}, Algorithm~\ref{algo:PMC} either correctly decided that $G$ does not have a perfect matching cut, or correctly assigned all vertices in $L_i$ to their private neighbors:  $L_{i}\subseteq Q$.
\end{fact}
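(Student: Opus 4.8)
The plan is to prove Fact~\ref{fact:correctness-for} by induction on the rounds of the for-loop, i.e.\ on $i$ running from $h$ downto $1$. The induction hypothesis, after round $i+1$ has finished, is that either the algorithm has already output `NO' correctly (because $G$ genuinely has no perfect matching cut), or else $L_{i+1}\cup L_{i+2}\cup\cdots\cup L_h\subseteq Q$ and the clauses of $\phi$ built so far are consistent with every perfect matching cut of $G$ in the following sense: for every perfect matching cut $(X,Y)$ of $G$, the assignment $v\mapsto[\,v\in X\,]$ satisfies all clauses added in rounds $h,\dots,i+1$, and conversely the membership relations forced by those clauses (the private-neighbor pairings and the monochromatic constraints) hold in $(X,Y)$. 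We also carry along the invariant that $Q$ is a determined set in the sense defined before Lemma~\ref{lem:leaf}, so that Lemma~\ref{lem:leaf} may legitimately be applied in round~$i$.

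Then I would analyze round~$i$. First note that, by the induction hypothesis $L_{i+1}\subseteq Q$, every vertex $v\in L_i\setminus Q$ is a leaf: $\bigl(N(v)\cap L_{i+1}\setminus Q\bigr)\cup\{v\}=\{v\}$ is trivially monochromatic. So Lemma~\ref{lem:leaf} applies to each such $v$, and it must satisfy exactly one of (c1), (c2), (c3) \emph{provided $G$ has a perfect matching cut}. Hence if the test at line~\ref{line:NO-c1c2c3} fires, $G$ has no perfect matching cut and returning `NO' is correct. Otherwise, for each $v$ the algorithm identifies the private neighbour structure dictated by Lemma~\ref{lem:leaf}: in case (c1)/(c3) the private neighbour of $v$ is its unique neighbour $u$ in the relevant component $A$ of $G[L_{i-1}\setminus Q]$; in case (c2) the pair $\{u_1,u_2\}$ and the pair $\{v,w\}$ are each non-monochromatic. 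I would argue that the $2$-CNF clauses added on lines~\ref{line:c1c3-private}, \ref{line:c1c3-others-v}, \ref{line:c1c3-others-u} (resp.\ \ref{line:c2-private}, \ref{line:c2-others-v}--\ref{line:c2-others-u2}) exactly encode: $p(v)=u$ i.e.\ $v,u$ are non-monochromatic; and $v$ (resp.\ $u$, resp.\ $u_1,u_2,w$) are monochromatic with all of their still-undetermined neighbours — which is forced because in a perfect matching cut each vertex has a \emph{unique} private neighbour, already pinned down to $u$ (resp.\ the other of $u_1,u_2$). Thus every perfect matching cut $(X,Y)$ of $G$ satisfies the newly-added clauses, and the newly-added vertices $v,u$ (resp.\ $v,u_1,u_2,w$) genuinely have their private neighbours inside $Q':=Q\cup\{v,u\}$ (resp.\ $\cup\{v,u_1,u_2,w\}$), so $Q$ remains determined. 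Iterating over all $v\in L_i\setminus Q$ in the \textbf{foreach} loop, and observing that once $v$ is placed in $Q$ it is skipped, we get $L_i\subseteq Q$ at the end of round~$i$, completing the induction step.

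The main obstacle I anticipate is the bookkeeping around the order in which the vertices of $L_i\setminus Q$ are processed inside the \textbf{foreach} loop of line~\ref{line:foreach-v-Li}: when $v$ is handled, some of its neighbours in $L_{i-1}$ (or even the $w\in L_{i-2}$ of case (c2)) may have been absorbed into $Q$ by the processing of an \emph{earlier} $v'\in L_i$ in the same round. One must check that Lemma~\ref{lem:leaf} still applies to $v$ with respect to the current, enlarged $Q$ — which it does, since $Q$ is determined at every intermediate stage and $v$ is still a leaf — and that the component $A$, or the vertices $u_1,u_2,w$, referenced by the algorithm are well-defined and unique. A second, subtler point is consistency: if two leaves $v,v'\in L_i$ end up forcing overlapping constraints (e.g.\ sharing a neighbour in $L_{i-1}$), the clauses must not contradict each other; this is handled by the fact that all of them are merely recording necessary conditions that \emph{any} perfect matching cut of $G$ satisfies, so consistency is automatic whenever $G$ has a perfect matching cut, and when it does not, either some line~\ref{line:NO-c1c2c3} check fires or the final $2$-SAT instance is unsatisfiable — which is exactly what the subsequent correctness argument (beyond this Fact) will exploit.

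Finally, for the base layer of the induction, round $i=h$ is clean because $L_{h+1}=\emptyset$ forces every vertex of $L_h$ to be a leaf and $Q=\emptyset$ is vacuously determined, so the step above applies verbatim with $Q=\emptyset$; and the ``$G$ has a perfect matching cut'' hypothesis in Lemma~\ref{lem:leaf} is handled by the standard device of splitting into the two cases ``$G$ has a pmc'' (where the forced structure is real) and ``$G$ has no pmc'' (where we only need that the algorithm does not \emph{wrongly} output a pmc, which will follow because the invariant shows any assignment it does output yields a genuine pmc). With Fact~\ref{fact:correctness-for} in hand, the overall correctness of Algorithm~\ref{algo:PMC} will follow by taking $i=1$: either `NO' has been returned correctly, or $L_1\subseteq Q$ hence $V=Q$, every vertex's private neighbour is pinned down, and $\phi$ is satisfiable iff the induced pairing extends to a perfect matching cut of $G$.
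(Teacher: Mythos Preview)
Your proposal is correct and follows essentially the same approach as the paper: induction on $i$ from $h$ down to $1$, using the hypothesis $L_{i+1}\subseteq Q$ to conclude that every $v\in L_i\setminus Q$ is a leaf, then invoking Lemma~\ref{lem:leaf} to justify the case split and the clauses added. You are in fact more careful than the paper in explicitly maintaining the invariant that $Q$ stays determined (so that Lemma~\ref{lem:leaf} remains applicable even as $Q$ grows \emph{within} a single round of the \textbf{foreach} loop), a point the paper's proof leaves implicit.
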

\begin{proof}
We prove this fact by induction on $i$. First consider the case $i=h$. In this case, any $v\in L_h\setminus Q$ is a leaf (as $L_{h+1}=\emptyset$). Hence, by Lemma~\ref{lem:leaf}, every $v\in L_h\setminus Q$ must satisfy (c1), (c2) or (c3), otherwise~$G$ has no perfect matching cuts and the output NO at line~\ref{line:NO-c1c2c3} is correct. Thus, suppose that $G$ has a perfect matching cut.

If $v$ satisfies (c1) or (c3) with $A$ is the connected component of $G[L_{h-1}\setminus Q]$ containing exactly one neighbor of~$v$, say $u$, then $u$ must be the private neighbor of~$v$ (and $v$ therefore must be the private neighbor of $u$). Thus, in this case, the \textsc{2-cnf} clauses at line~\ref{line:c1c3-private} correctly assign~$v$ to $p(v)=u$ and~$u$ to $p(u)=v$, and subsequently $v$ and~$u$ become determined at line~\ref{line:c1c3-assigned}. 
Finally, all other neighbors~$x\not=u$ of~$v$ must belong to the same part with~$v$ in any perfect matching cut of $G$ (line~\ref{line:c1c3-others-v}), and all other neighbors~$x\not=v$ of~$u$ must belong to the same part with~$u$ (line~\ref{line:c1c3-others-u}). 

We now consider the case where $v$ satisfies (c2).  In this case, the \textsc{2-cnf} clauses at line~\ref{line:c2} correctly encode the fact that one of $u_1, u_2$ is the private neighbor of $v$ and the other is the private neighbor of $w$. Then the four vertices become determined (line~\ref{line:c2-assigned}).  
Finally, as in the above case, we have to force other neighbors of $v, w, u_1, u_2$ to belong to the same part with $v, w, u_1, u_2$, respectively (lines~\ref{line:c2-others-v}--\ref{line:c2-others-u2}). 
Thus the statement holds in case $i=h$. 

Let $i<h$. By induction, all vertices in $L_{i+1}$ are correctly assigned: $L_{i+1}\subseteq Q$. Thus, any vertex $v\in L_i\setminus Q$ is a leaf, and Lemma~\ref{lem:leaf} is applicable for all vertices in $L_i\setminus Q$. Then by the same arguments as in the case of round~$h$ above, it follows that the statement holds for round~$i$.
\qed
\end{proof}
By Fact~\ref{fact:correctness-for}, it remains to show that the output of Algorithm~\ref{algo:PMC} at line~\ref{line:2sat} is correct, i.e., $G$ has a perfect matching cut if and only if the constructed \textsc{2sat}-instance $\phi$ is satisfiable. 

Suppose first that $\phi$ admits a truth assignment. Then the partition $(X,Y)$ of $V(G)$ with~$X$ the set of all \lq true\rq\ vertices and $Y$ the set of all \lq false\rq\ vertices is a perfect matching cut of~$G$: the clauses constructed at lines~\ref{line:c1c3-private} and~\ref{line:c2-private} ensure that $X$ and $Y$ are non-empty, and any vertex in $X$ has a neighbor in $Y$ and vice versa. Moreover, the clauses constructed at lines~\ref{line:c1c3-others-v} and~\ref{line:c1c3-others-u}, and at lines~\ref{line:c2-others-v}--\ref{line:c2-others-u2} ensure that if $x\in X$ (respectively $x\in Y$) has a neighbor $p(x)\in Y$ (respectively $p(x)\in X$) then all neighbors $\not= p(x)$ of $x$ belong to the same part with $x$ and all neighbors $\not= x$ of $p(x)$ belong to the same part with $p(x)$. That is, $x$ has exactly one neighbor $p(x)$ in the other part. 

Conversely, suppose that $G$ has a perfect matching cut $(X,Y)$. Then set variable~$v$ to True if the corresponding vertex~$v$ is in~$X$, and False if it is in~$Y$. 
With Lemma~\ref{lem:leaf},  
we argue that~$\phi$ is satisfied by this truth assignment: the two clauses at line~\ref{line:c1c3-private} are satisfied because in this case $u$ is the private neighbor of $v$ and $v$ is the private neighbor of $u$, hence $u$ and $v$ must belong to different parts. 
Also, the clauses at lines~\ref{line:c1c3-others-v} and~\ref{line:c1c3-others-u} are satisfied because the other neighbors~$x$ of~$v$, respectively, of~$u$ must belong to the same part as~$v$, respectively, as~$u$. Similar, the four clauses at line~\ref{line:c2-private} are satisfied because in this case $v$ and $w$, as well as $u_1$ and $u_2$ must belong to different parts. Finally, the clauses at lines~\ref{line:c2-others-v}--\ref{line:c2-others-u2} are satisfied because the other neighbors~$x$ of $v,w,u_1$ and $u_2$ must belong to the same part as $v,w,u_1$ and~$u_2$, respectively.

\section{Conclusions}\label{sec:con}
We have shown that all three problems \textsc{matching cut}, \textsc{perfect matching cut} and \textsc{disconnected perfect matching} are $\NP$-complete in $P_{14}$-free $8$-chordal graphs. The hardness result for \textsc{perfect matching cut} in $P_{14}$-free graphs solves an open problem posed in~\cite{LuckePR23}. 
For \textsc{matching cut} and \textsc{disconnected perfect matching}, the hardness result improves the previous results in~\cite{LuckePR23} in $P_{15}$-free graphs, respectively, in $P_{19}$-free graphs, to $P_{14}$-free graphs. 
An obvious question is whether one of these problems remains $\NP$-complete in $P_t$-free graphs for some $t<14$. 

The hardness result for \textsc{perfect matching cut} in $8$-chordal graphs partly solves an open problem posed in~\cite{LeT22,LuckePR24}.  
We have proved that 
\textsc{disconnected perfect matching} and \textsc{perfect matching cut} are solvable in polynomial time when restricted to $4$-chordal graphs. 
Thus, another obvious question is whether one of these problems remains $\NP$-complete in $k$-chordal graphs for $5\le k\le7$.

\paragraph{Acknowledgments} We would like to thank one of the reviewers for her/his extremely careful reading and many detailed and useful comments.

\bibliographystyle{plainurl}
\bibliography{wg23JCSS}

\end{document}